\documentclass[journal]{IEEEtran}
\ifCLASSINFOpdf
\else
\fi
\usepackage{subfigure} 
\usepackage{graphicx}
\usepackage{amssymb}
\usepackage{amsmath}
\usepackage{color} 
\usepackage{cite}
\usepackage{url} 
\usepackage{array,tabularx}
\usepackage{multicol} 
\usepackage{multirow} 
\usepackage{makecell}
\usepackage{diagbox}
\usepackage{booktabs}
\newtheorem{theorem}{Theorem}
\newtheorem{proposition}{Proposition}

\newtheorem{lemma}{Lemma}
\newtheorem{definition}{Definition}
\newtheorem{assumption}{Assumption}
\newtheorem{remark}{Remark}
\newtheorem{proof}{Proof}

\newtheorem{property}{Property}
\usepackage{tikz,xcolor,hyperref}
\definecolor{lime}{HTML}{A6CE39}
\DeclareRobustCommand{\orcidicon}{
	\begin{tikzpicture}
		\draw[lime, fill=lime] (0,0) 
		circle [radius=0.16] 
		node[white] {{\fontfamily{qag}\selectfont \tiny ID}}; 
		\draw[white, fill=white] (-0.0625,0.095) 
		circle [radius=0.007];	  
	\end{tikzpicture}
	\hspace{-2mm}}
\foreach \x in {A, ..., Z}{%
	\expandafter\xdef\csname orcid\x\endcsname{\noexpand\href{https://orcid.org/\csname orcidauthor\x\endcsname}{\noexpand\orcidicon}}
}
\makeatother

\begin{document}
\title{A Distributed PI+Reset Scheme for Discrete-Time Economic Dispatch of A Grid-connected BESS Network}
\author{
	Yalin Zhang\orcidA{},
        Zhongxin Liu\orcidB{},
        Fuyong Wang\orcidC{},
        and Zengqiang Chen\orcidE{}
\thanks{
	Manuscript received XX, XX; revised XX, XX;
	accepted XX, XX. This work was supported in part by the National Natural Science Foundation of China (Grant No. 62103203 and 92367105) and the General Terminal IC Interdisciplinary Science Center of Nankai University (Corresponding author: Fuyong Wang.). 
	\par The authors are with the College of Artificial Intelligence, Nankai University, Tianjin 300350, and also with the Tianjin Key Laboratory of Interventional Brain-Computer Interface and Intelligent Rehabilitation, Nankai University, Tianjin 300350, China (e-mail: zhangyl@mail.nank.edu.cn; lzhx@nankai.edu.cn; wangfy@nankai.edu.cn; zhangym@mail.nankai.edu.cn; chenzq@nankai.edu.cn).
}
}
\maketitle
\begin{abstract}
	This article investigates the discrete-time economic dispatch (ED) problem of a battery energy storage system (BESS) network with an energy router (ER). The continuous increase in operational cost of a BESS network is caused by the internal power consumption and capacity degradation of each battery. In addition, the transaction amount of purchasing or selling electricity from the utility grid (UG) also becomes one of the sources that constitute this cost. Therefore, in order to address this ED problem and reduce costs, we design a distributed solution based on discrete-time multi-agent systems (MAS) with a novel proportional integral (PI) controller. In this scheme, a marginal cost (MC) consensus controller is designed to drive the inverter. In addition, a consensus controller is designed to estimate the average power mismatch, resulting in a routing algorithm based on this. Compared with existing distributed schemes with proportional (P) controllers, using a PI controller with a reset mechanism ensures that the integral term accumulates from 0 when the proportional term changes sign. Driven by this method, the convergence speed of the scheme is accelerated, while the control accuracy is also improved without causing significant overshoot. Provided the enabling conditions for the reset mechanism and analyzed the algorithm performance under SoC level constraints. The related simulation cases verify the effectiveness and progressiveness of the designed algorithm.
\end{abstract}
\begin{IEEEkeywords}
Battery energy storage system, multi-agent systems, distributed control, economic dispatch, a PI+Reset controller.
\end{IEEEkeywords}
\section{Introduction}
\subsection{Background}
\IEEEPARstart{R}{enewable} energy generation exhibits intermittency and randomness, resulting in low power quality and even power outages \textcolor{blue}{\cite{wangApplicationEnergyStorage2022, ANSARI2024123996}}. Fortunately, the battery energy storage system (BESS) can suppress the power supply and demand gap and power fluctuations, thus helping to alleviate power shortages and improve power supply quality \textcolor{blue}{\cite{caleroReviewModelingApplications2022, 10916978}}. In view of this, BESSs are used to be integrated into microgrids, as shown in Fig. \ref{figei}, and assist in providing reliable electricity \cite{wangApplicationEnergyStorage2022, caleroReviewModelingApplications2022, 10916978, hossainlipuReviewControllersOptimizations2022}. However, for a grid-connected BESS network, the energy storage loss of each battery unit and the power transmission loss increase the operating cost of a BESS network \cite{caleroReviewModelingApplications2022, forero-quinteroProfitabilityAnalysisDemandside2022, rouholaminiReviewModelingManagement2022}. In addition, the transaction volume between this network and the utility grid (UG) has also become a part of daily operating costs. 
It will inevitably bring huge operating costs to the operators of this power network if there is no efficient and reliable ED solution.
\begin{figure}
	\centering
	\includegraphics[width=8.5cm]{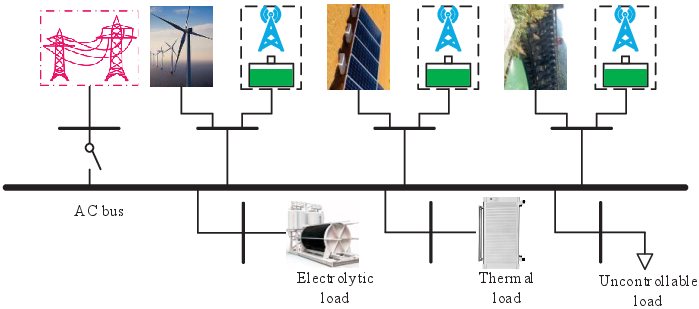} \caption{BESSs in microgrids\label{figei}}
\end{figure}
\par For the ED problem of microgrids, some  centralized solutions are initially centralized, each one of which is simple and feasible, but susceptible to single point failure and requires high communication bandwidth and high cost control centers \cite{caleroReviewModelingApplications2022, rouholaminiReviewModelingManagement2022}. These defects indicate that a centralized scheme is not an efficient and reliable solution.  Recently, we note that distributed control technology, due to its robustness and scalability, can avoid the use of control centers and high bandwidth communication links, thus greatly reducing design costs and improving reliability. Therefore, we plan to invest effort in designing an efficient distributed ED scheme for a grid-connected BESS network.
\subsection{Literature Review and Motivation}
\par With the rise of distributed technology represented by multi-agent systems (MASs), there have been significantly improved in solutions to the ED problem of microgrids. For example, authors in \cite{Yan2022a} design a distributed ED algorithm for microgrids with an event triggered communication mechanism and a privacy protection mechanism. Similarly, the ED problem of microgrids is solved by a distributed ED scheme with an event triggering communication mechanism in \cite{Wan2021} under uncertain communication. In response to the phenomenon of packet loss in communication, authors in \cite{Li2023a} design a distributed algorithm for power mismatch estimation and exchange power by introducing establishing virtual buffer nodes. Furthermore, authors in \cite{Liu2022b} design a distributed finite-time scheme for the ED problem of microgrids with an event triggered communication scheme. To further improve the convergence rate, authors in \cite{Dai2021, Li2022m} design a distributed fixed-time ED scheme, which are further developed as a solution that can suppress interference and support intermittent communication in \cite{Huang2023b}. Besides, authors in \cite{Mao2021} transforms the ED problem into its dual problem and design a finite-time marginal cost (MC) consensus protocol with local gradients, which improves in \cite{Ji2023} by applying the prescribed-time control theory. All of these results can be attributed to the design of two distributed schemes, namely MC consensus controller \cite{tanExtensionsLocationalMarginal2022, spanglerPowerGenerationOperation2014} and a power mismatch estimator. 
\par  A MC consensus controller has other significant uses. For example, based on a MC controller, authors in \cite{zaeryNovelFullyDistributed2021} design economic droop control for DC microgrids, and the fixed time control theory is combined to achieve secondary control objectives. Authors in \cite{Li2022l} design an economic droop based on MC consensus and a communication strategy with a dynamic event triggering mechanism for an AC and DC hybrid microgrid to achieve secondary control. Besides, authors in \cite{Cheng2023} design a distributed algorithm for voltage restoration and ED problem of droop-controlled DC microgrids, taking into account power limitations. Furthermore, a distributed MC consensus controller capable of withstanding false data injection attacks is developed in \cite{Zhang2021d} to ensure the achievement of secondary control objectives. Some researchers in \cite{Chen2023a} design an MC controller that can solve non periodic sampling and time delay. 
\par These results show excellent performance in saving communication resources \cite{Yan2022a, Wan2021, Liu2022b, Huang2023b, Li2022l}, enhancing the reliability of communication networks \cite{Wan2021, Li2023a, Huang2023b, Zhang2021d}, and improving convergence rate \cite{Liu2022b, Dai2021, Li2022m, Huang2023b, zaeryNovelFullyDistributed2021}, respectively. Some of the above schemes \cite{Yan2022a, Wan2021, Li2023a, Mao2021, Li2022l, Zhang2021d} are unexceptionally with proportional (P) control controllers, which are effective, but  encounter low convergence speed and control
accuracy. However, solutions to accelerate convergence speed, such as finite/fixed time controllers \cite{Liu2022b, Dai2021, Li2022m, Huang2023b, Ji2023, zaeryNovelFullyDistributed2021}, often suffer from oscillations due to their fractional powers.
\par Drawing inspiration from ED schemes of microgrids, some researchers start to develop distributed ED schemes for a BESS network in \cite{yuFrequencySynchronizationPower2021b, zhaoDifferentialPrivacyEnergy2022, jinManageDistributedEnergy2022}. The authors in \cite{yuFrequencySynchronizationPower2021b} design a distributed scheme based on MC consensus controller and power mismatch estimation for BESS in isolated microgrids, taking into account battery capacity limitations. Unfortunately, the power loss on the transmission line is not considered, and the cost function used directly follows other distributed power sources, which means that the operating cost of BESS is not modeled. Subsequently, a distributed ED scheme considering privacy protection is designed for an isolated BESS network in \cite{zhaoDifferentialPrivacyEnergy2022}. However, the modeling problem of cost function of a BESS remains unresolved. Until \cite{jinManageDistributedEnergy2022}, the authors consider the cost of charging and discharging in the cost function.
\par The above results preliminarily explore some distributed ED schemes of a BESS network based on MC consensus controllers and power mismatch estimators. However, these schemes are only applicable to the ED problem in isolated mode. In addition, these schemes are all with P controllers, which often result in slow convergence speed and low control accuracy. Besides, the operating cost model of a BESS network on a small time scale is too simple and does not consider factors such as capacity degradation. Therefore, there is an urgent need for a cost function model that considers multiple factors and a distributed ED scheme with high control accuracy and convergence speed suitable for a grid connected BESS network.
\subsection{State of Contribution}
\par We have learned that the proportional integral (PI) controller exhibits high control accuracy \cite{Yi2016}, which is widely adopted in industrial production. But with it, the system tends to exhibit overshoot \cite{banosResetControlSystems2011}. In a groundbreaking work \cite{Yi2016}, it is required that the integral term is transmitted in the communication network, which is improved by \cite{10149512}. However, this solution aims to solve the power allocation problem in an isolated BESS network and does not consider transmission loss, requiring prior knowledge of the total load power in the network. Recently, some PI controllers improved with some reset mechanisms, namely PI+Reset controllers, have been developed to improve the dynamic and steady-state performance of distributed controllers for MASs with continuous time dynamics \cite{chengResetControlLeaderfollowing2022, huResetControlConsensus2022, mengResetControlSynchronization2019}. Based on these results, we have also preliminarily developed a distributed MC consensus controller with a continuous time dynamic for a grid-connected BESS network \cite{10302354}. In this paper, considering transmitted loss, we plan to build a discrete-time distributed ED scheme with PI+Reset controllers oriented to a grid-connected BESS network. 
In detail, our contributions are as follows:
\begin{enumerate}
	\item In the designed ED scheme, a PI+Reset controller is designed to promote MC consensus, where each BESS inverter is driven by its MC. The introduction of an integrator with a reset mechanism accelerates convergence and avoids overshoot, which is a significant advantage compared to previous distributed ED schemes with P controllers \cite{chenDistributedEconomicDispatch2021}.
	\item A routing scheme is designed based on a distributed average power mismatch estimator. The estimated average power mismatch by each agent converges to 0 faster and smoother under a PI+Reset controller with a further improved reset mechanism than under a P controller \cite{chenDistributedEconomicDispatch2021}. The average power mismatch estimated values collected by the neighbors of the agent responsible for managing ER are used to construct the routing scheme to quickly maintain the supply-demand balance.
	\item The enabling conditions of the reset mechanism and the stability of the PI+R controllers are well analyzed. By analyzing the eigenvalues of the closed-loop error system, the gain conditions for the effectiveness of the reset mechanism and stability are provided.
\end{enumerate}
\par Next, the relevant preliminaries are introduced in section II. The design scheme and discussions on their properties are organized in section III. Relevant simulation cases are investigated in section IV. Finally, a conclusion on this paper is drawn in section V.
\section{Preliminaries}
In this section, the cost and optimal ED of a BESS network are analyzed to construct the control objectives. The relevant knowledge of graph theory is introduced for future use. In the cost analysis, the internal power consumption and capacity degradation costs of a battery, power loss on the transmission line, and transaction volume between the network and UG in the grid-connected mode are considered. Meanwhile, the Lagrange multiplier method is applied to derive the optimal ED condition.
\subsection{The Optimal ED for A BESS Network}
\par In this section, we analyze the optimal ED for a BESS network with $n$ BESSs. Firstly, an operational cost function \cite{10302354} for BESS $i$ is introduced as
$$f_i=\beta_iP_i^2+\alpha_iP_i,\;i\in\{1,2,\cdots,n\},$$
where $\beta_i$ and $\alpha_i$ are constants.
\par For a BESS network connected to UG, the shortage in power will be purchased from the company managing UG. On the contrary, excessive electricity will also be sold to the company. This transaction can be modeled as
$$f_{UG}=\rho P_{UG},$$
where $\rho$ is the transaction electricity price agreed upon by both parties, $P_{UG}$ is the exchange power. Here, a unified electricity price is adopted for the electricity trading. $P_{UG}>0$ implies power flows from UG to the BESS network, and $P_{UG}<0$ implies that the power flows in the opposite direction.
\par Considering the power loss on the transmission line, a general simplified model \cite{chenDistributedEconomicDispatch2021, Soliman2012} derived from the B matrix representation \cite{spanglerPowerGenerationOperation2014}, which encompasses the topology information of a power network, is as follows
$$P_{L,i}=Z_iP_i^2,$$
where $P_{L,i}$ is the transmission loss caused by the output power of BESS $i$, $Z_i$ is the output impedance of BESS $i$. 
\par Thus, the operation cost function and its constraints of a grid-connected BESS network is formulated as
$$\begin{array}{l}
	\mathrm{min}\;F=\sum_{i=1}^{n}f_i+f_{UG},\\
	s.t.\;\sum_{i=1}^{n}P_i+P_{UG}=\sum_{i=1}^{n}(P_{D,i}+P_{L,i}+P_{DG,i}),\\
	\;\;\;\;\;\;P_{m,i}\le P_i\le P_{M,i},
\end{array}$$
where $P_{m,i}$ and $P_{M,i}$ are the minimum and maximum output power limits of BESS $i$, respectively. In the following text, regardless of the presence or absence of BESS on bus $i$, $P_i$ is referred to as the output power of BESS $i$. It should be emphasized that for buses without any BESS, such as bus $j$, $P_{M,j}=0$ and $P_{m,j}=0$.
\par Construct a Lagrangian function for a grid-connected BESS network as follows
$$\begin{array}{l}
	L_a(P_1,\cdots,P_n,P_{UG})=F+\omega_1(\sum_{i=1}^{n}(P_{D,i}+P_{L,i})\\
	-\sum_{i=1}^{n}P_i-P_{UG})+\sum_{i=1}^{n}\omega_{2,i}(P_i-P_{M,i})\\
	+\sum_{i=1}^{n}\omega_{3,i}(P_{m,i}-P_i),
\end{array}$$
where $\omega_1$, $\omega_{2,i}$, and $\omega_{3,i}$ are Lagrangian multipliers for BESS $i$, respectively. Without considering capacity limitations, calculate the gradient of the above as follows
$$\frac{\partial L_a(P_1,\cdots,P_n,P_{UG})}{\partial P_i}=2\beta_iP_i+\alpha_i+\omega_1(2Z_i P_i-1),$$
$$\frac{\partial L_a(P_1,\cdots,P_n,P_{UG})}{\partial P_{UG}}=\rho-\omega_1,$$
which leads to the optimal condition as follow
\begin{equation}
	\label{gsmmc}
	\left\{\begin{array}{l}
		\omega_1^*=\frac{2\beta_iP_i+\alpha_i}{1-2Z_i P_i},\\
		\omega_1^*=\rho.
	\end{array}\right.
\end{equation}
Combined with the capacity limitations, the optimal output power of BESS $i$ can be calculated from the following, i.e.,
\begin{equation}
	\label{power}
	\textcolor{blue}{P_i^*=\left\{\begin{matrix}
			P_{m,i},&if\;\frac{\omega_1^*-\alpha_i}{2(\beta_i+Z_i\omega_1^*)}<P_{m,i},\\
			\frac{\omega_1^*-\alpha_i}{2(\beta_i+Z_i\omega_1^*)},&if\;P_{m,i}\le\frac{\omega_1^*-\alpha_i}{2(\beta_i+Z_i\omega_1^*)}\le P_{M,i},\\
			P_{M,i},&if\;\frac{\omega_1^*-\alpha_i}{2(\beta_i+Z_i\omega_1^*)}>P_{M,i},
		\end{matrix}
		\right.}
\end{equation}
and the optimal exchange power $P_{UG}$ is
\begin{equation}
	\label{powermg}
	P_{UG}^*=\sum_{i=1}^{n}(P_{D,i}+P_{L,i}^*-P_i^*),
\end{equation}
where $P_{L,i}^*=Z_i(P_i^*)^2$, and $\sum_{i=1}^{n} P_{L,i}^*$ is so-called the optimal line loss power.
\subsection{Control Objectives}
\par The goals of the ED for a BESS network are often divided into two parts, where one is to minimize the total power generation cost, and the other is to ensure a balance between supply and demand. Specifically, as shown in Fig. \ref{figei}, it is necessary to design a routing protocol for the ER and a control scheme for each BESS inverter to meet the optimal ED conditions. Based on the previous analysis, we have summarized these goals in detail as follows
\begin{itemize}
	\item Define MC for BESS $i$ as $\lambda_i=\frac{2\beta_iP_i+\alpha_i}{1-2Z_i P_i}$.
	For the grid-connected mode, MCs converge asymptotically to the real-time electric price $\rho$ of UG, i.e.,
	$$\lim\limits_{k\to+\infty}\vert\lambda_i-\rho\vert,\;\forall i\in\{1,2,\cdots,n\},$$
	and calculates the optimal power flow by \eqref{power} to drive each inverter.
	\item The total demand is defined here as the sum of the total load and the bus loss power. For the grid-connected mode, a routing protocol should be developed to drive the ER. Calculate the power exchange between this BESS network and UG by \eqref{powermg} to ensure a balance between power supply and demand, i.e.,
	$$\sum_{i=1}^{n}(D_i+P_{L,i}-P_i-P_{UG})=0,$$
	where $D_i$ is the load at bus $i$.
\end{itemize}
\subsection{Graph Theory}
\par For MASs containing $n$ agents that manages BESSs and a ER, the communication topology between agents can be described by a graph ${\cal G}({\cal V},{\cal E})$ containing a vertex set ${\cal V}=\{v_1,v_2,\cdots,v_n\}$ and an edge set ${\cal E}$. If agent $i$ can access agent $j$, it is assumed that $(v_i,v_j)\in{\cal E}$, and agent $j$ is a neighbor
of agent $i$. Define an adjacency matrix and a degree matrix as ${\cal A}=[a_{ij}]_{n\times n}$ and ${\cal D}=diag(d_1,d_2,\cdots,d_n)$, respectively. $a_{ij}=1$ if $(v_i,v_j)\in{\cal E}$ and $a_{ij}=0$ otherwise. $d_i$ is the in-degree of agent $i$, i.e., $d_i=\sum_{j=1}^{n}a_{ij}$. Thus, a Laplacian matrix is defined as $L={\cal D}-{\cal A}$.
\par The directed graph and its corresponding Laplace matrix are introduced above. It is easy to verify that the Laplace matrix corresponding to an undirected graph, which implies that $(v_j,v_i)\in{\cal E}$ if $(v_i,v_j)\in{\cal E}$ for $i,j\in\{1,2,\cdots,n\}$, is real symmetric.
\par For a leader of the MASs, those agents that can access it are considered as its neighbors. Define its out-degree matrix as $B=diag(b_{11},b_{11},\cdots,b_{nn})$. $b_{ii}=1$ if agent $i$ can access the leader, and $b_{ii}=0$ otherwise.
\begin{lemma}
	Let $H=L+B$. Then, $H$ is  positive definite. That is, $0<\eta_1(H)\le\eta_2(H)\le\cdots\le\eta_n(H)$, where $\eta_i(H)$ with $i\in\{1,2,\cdots,n\}$ is a characteristic value of $H$.
\end{lemma}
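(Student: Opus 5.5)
The plan is to prove positive definiteness directly through the quadratic form of $H$. Doing so requires making explicit the standing assumptions under which the statement is meant. First, the communication graph ${\cal G}$ among the $n$ agents is undirected and connected. Second, at least one agent can access the leader, i.e., $B\neq 0$. Without these assumptions the claim is false. For instance, if $B=0$ then $H=L$ has the eigenvector $\mathbf{1}_n$ with eigenvalue $0$, and if ${\cal G}$ is disconnected, a component containing no pinned agent produces a zero eigenvalue.

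\emph{Step 1: symmetry.} As noted in the graph theory discussion above, $L$ is real symmetric for an undirected graph. Since $B$ is diagonal, $H=L+B$ is real symmetric. Hence all its characteristic values are real and can be ordered as $\eta_1(H)\le\cdots\le\eta_n(H)$. It therefore suffices to show $x^{T}Hx>0$ for every nonzero $x\in\mathbb{R}^n$.

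\emph{Step 2: the quadratic form.} Using $a_{ij}=a_{ji}$ and $d_i=\sum_{j}a_{ij}$, a routine expansion gives
\begin{equation*}
x^{T}Hx=\frac{1}{2}\sum_{i=1}^{n}\sum_{j=1}^{n}a_{ij}(x_i-x_j)^2+\sum_{i=1}^{n}b_{ii}x_i^2\ge 0 .
\end{equation*}
This shows $H$ is positive semidefinite.

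\emph{Step 3: trivial kernel.} This is the only step that actually uses the hypotheses, so it is the one to state carefully. Suppose $x^{T}Hx=0$. Both sums are nonnegative, so each term vanishes.
\begin{itemize}
\item The first sum vanishing gives $x_i=x_j$ whenever $a_{ij}=1$. By connectivity, following a path between any two vertices shows $x=c\mathbf{1}_n$ for some scalar $c$.
\item The second sum then equals $c^2\sum_i b_{ii}$. Because some $b_{ii}=1$, this forces $c=0$.
\end{itemize}
Thus $x=0$, so $H$ is positive definite and $0<\eta_1(H)$, which completes the argument.
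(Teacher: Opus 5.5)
The paper gives no proof of this lemma; it quotes it as a standard fact and leaves its hypotheses implicit. Your argument is the standard proof of that fact, and it is correct: symmetry, the identity $x^{T}Hx=\frac{1}{2}\sum_{i,j}a_{ij}(x_i-x_j)^2+\sum_i b_{ii}x_i^2$, and a trivial kernel from connectivity plus at least one pinned agent. You are also right that the statement as printed is false without extra hypotheses. Connectivity is Assumption 1, which the paper states only after the lemma, and $B\neq 0$ is implicit in the existence of flexible agents (e.g.\ $B=\mathrm{diag}(1,0,0,0)$ in the simulations). In fact your Step 3 needs only that each connected component of ${\cal G}$ contains some agent with $b_{ii}=1$.
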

\par We have the following assumptions and definitions for the problem to be addressed in this article.
\begin{assumption}
	\label{assu1}
	Assume that the communication topology used in this article is undirected and connected.
\end{assumption}
\begin{assumption}
	\label{assu2}
	Assume that the agent managing ER can only be accessed by a small number of agents.
\end{assumption}
\begin{definition}
	\label{def1}
	In the grid connected mode, the neighbors of the agent managing ER are called flexible agents. Compared to other agents, flexible agents are only responsible for collecting the average power mismatch estimated by their neighboring agents, and then passing them on to the agent that manages ER. The agent that manages ER only receives information about average power mismatch estimation collected by its neighbors, and does not utilize MCs.
\end{definition}
\begin{assumption}
	\label{assu3}
	Assume that there is an agent that manages a BESS that is neither a flexible agent nor its neighbor. This indicates that not all agents are able to access the leader.
\end{assumption}
\section{Distributed PI+Reset schemes for discrete time economic dispatch of BESS network}
\par In this section, we will design a distributed PI+Reset scheme for a grid-connected BESS network with an ER, to accelerate convergence of MC and estimated power mismatch while minimizing operating costs and meeting the balance between supply and demand.
\begin{figure}
	\centering
	\includegraphics[width=8.5cm]{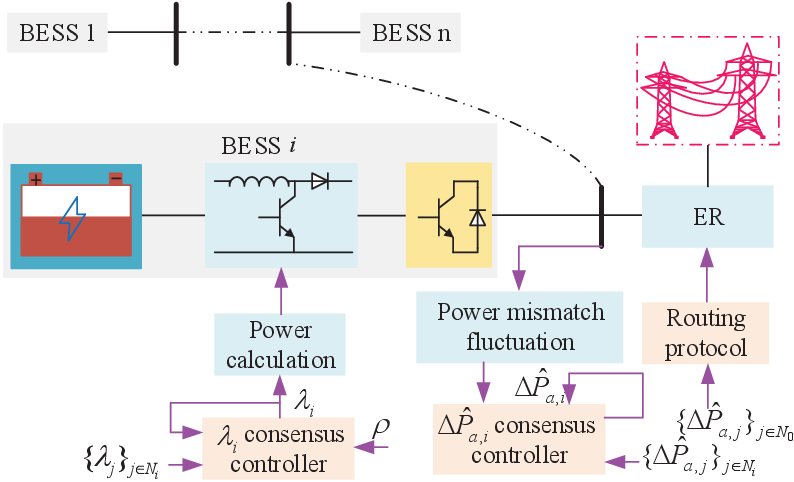} \caption{The control framework with a distributed ED scheme for a BESS network.\label{figei2}}
\end{figure}
\subsection{Distributed Control Scheme for MC Consensus}
\par Firstly, a positional PI controller based ED scheme for MCs is as follows
\begin{equation}
	\label{PIc}
	\lambda_i^{k+1}=\lambda_i^k-h_1\xi_{\lambda,i}^k-h_2\nu_{\lambda,i}^k,
\end{equation}
where $\xi_{\lambda,i}^k=\sum\limits_{i=1}^{n}a_{ij}(\lambda_i^k-\lambda_j^k)+b_{ii}(\lambda_i^k-\rho)$, the integral term is represented as the accumulation of errors, i.e., $\nu_{\lambda,i}^k=\sum\limits_{t=0}^{k} \xi_{\lambda,i}^t$, $h_1$ and $h_2$ are both positive numbers to be designed. This solution is capable of completing tasks and has high control accuracy. However, this can cause overshoot.
\begin{figure}
	\centering
	\includegraphics[width=7cm]{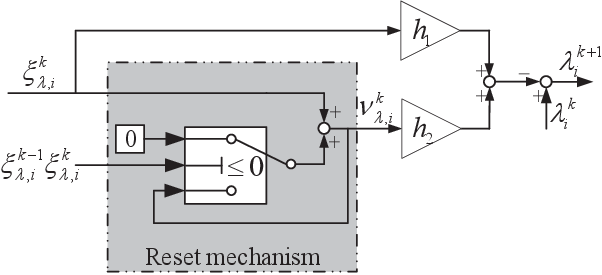} \caption{A PI controller with a error-dependent reset mechanism.\label{figpir}}
\end{figure}
\par To break this deadlock, we propose a state dependent switching controller as shown in Fig. \ref{figpir} based on \eqref{PIc}, i.e., a reset controller. Specifically, when the proportional term $\xi_{\lambda,i}^k$ changes sign, the integral term $\nu_{\lambda,i}^k$ is the term where the error accumulates again starting from 0. That is,
\begin{equation}
	\label{res}
	\nu_{\lambda,i}^k=\left\{\begin{aligned}
		\sum\limits_{t=l_i}^{k} \xi_{\lambda,i}^t,\; if\; \xi_{\lambda,i}^{k-1}\xi_{\lambda,i}^k> 0\\
		\xi_i^k,\; if\; \xi_{\lambda,i}^{k-1}\xi_{\lambda,i}^k\le 0
	\end{aligned}\right..
\end{equation}
\par Denote a jump instant set $l_\lambda$ as $l_\lambda:\{k|\xi_{\lambda,i}^{k-1}\xi_{\lambda,i}^k\le 0, i\in\{1,2,\cdots,n\}\}$. Define some stack vectors as $\xi_\lambda^k=[\xi^k_{\lambda,1},\xi^k_{\lambda,2},\cdots,\xi^k_{\lambda,n}]^T$, $\nu^k_\lambda=[\nu^k_{\lambda,1},\nu^k_{\lambda,2},\cdots,\nu^k_{\lambda,n}]^T$, and $\lambda^k=[\lambda^k_1,\lambda^k_2,\cdots,\lambda^k_n]$. Define a jump set and a flow set as ${\cal J}_\lambda:\{k|k\in l_\lambda\}$ and ${\cal F}_\lambda:\{k|k\notin l_\lambda\}$, respectively. Thus, we have
\begin{subequations}
	\label{6}
	\begin{equation}
		\xi^k_\lambda=H(\lambda^k-\rho),
	\end{equation}
	\begin{equation}
		\label{res1}
		\nu^k_\lambda=\left\{\begin{aligned}
			\nu^{k-1}_\lambda+\xi^{k-1}_\lambda,\; if\; k\in{\cal F}_\lambda\\
			\nu^{(k-1)+}_\lambda+\xi^{k-1}_\lambda,\; if\; k\in{\cal J}_\lambda
		\end{aligned}\right.,
	\end{equation}
	\begin{equation}
		\label{gl}
		\lambda^{k+1}=\lambda^k-h_1\xi^k_\lambda-h_2\nu^k_\lambda,
	\end{equation}
\end{subequations}
where $\nu^{(k-1)+}_\lambda$ is a vector obtained by resetting a certain dimension of $\nu^{k-1}_\lambda$.
\par For the grid connected mode, the base system \cite{banosResetControlSystems2011} is represented as
\begin{equation}
	\label{gs}
	x^{k+1}_\lambda=\Phi(H) x^k_\lambda
\end{equation}
where $x^k_\lambda=[(\xi^k_\lambda)^T\quad (\nu^k_\lambda)^T]^T$, $\Phi(H)=\left[\begin{matrix}
	I-h_1H & -h_2H\\
	I & I
\end{matrix}\right]$. And the characteristic polynomial of $\Phi(H)$ is
$$\begin{aligned}
	&\lvert \mu I-\Phi(H)\rvert\\
	=&\prod_{i=1}^n {\lvert\mu_i^2+(h_1\eta_i(H)-2)\mu_i+1+\eta_i(H)(h_2-h_1)\rvert}.
\end{aligned}$$
Thus, the eigenvalues of $\Phi$ can be obtained
\begin{equation}
	\mu_i=\frac{2-h_1\eta_i(H)\pm \sqrt{h_1^2\eta_i^2(H)-4h_2\eta_i(H)}}{2}.
\end{equation}
Denote $\Delta_i=h_1^2\eta_i^2(H)-4h_2\eta_i(H)$.
\begin{lemma}
	\label{eigl}
	Under the Assumption \ref{assu1}, if $\frac{4h_2}{h_1^2}\le\eta_m$ with $h_1,h_2>0$, or $\frac{4h_2}{h_1^2}>\eta_m$ with $0<h_2<h_1$, the base system \eqref{gs} is asymptotically stable, which leads to the system \eqref{6} being regular \cite{banosResetControlSystems2011}. Furthermore, matrix $\Phi$ has $2n$ real eigenvalues if $\frac{4h_2}{h_1^2}\le\eta_m$, while matrix $\Phi$ has at least a pair of conjugate eigenvalues if $\frac{4h_2}{h_1^2}>\eta_m$.
\end{lemma}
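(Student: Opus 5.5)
The plan is to decouple the base system \eqref{gs} into $n$ scalar second-order modes through the spectral decomposition of $H$. Under Assumption \ref{assu1}, $L$ is symmetric, and by the lemma on $H=L+B$ the matrix $H$ is symmetric positive definite. So there is an orthogonal $U$ with $U^THU=\mathrm{diag}(\eta_1(H),\dots,\eta_n(H))$. Applying the block similarity $\mathrm{diag}(U,U)$ to $\Phi(H)$ and then permuting coordinates turns $\Phi(H)$ into a block-diagonal matrix with $2\times 2$ blocks $\left[\begin{matrix}1-h_1\eta_i & -h_2\eta_i\\ 1 & 1\end{matrix}\right]$. This justifies the factorised characteristic polynomial stated before the lemma. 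Each block has characteristic polynomial $p_i(\mu)=\mu^2+a_i\mu+b_i$ with $a_i=h_1\eta_i-2$ and $b_i=1+\eta_i(h_2-h_1)$, and its roots are the $\mu_i$ given in the text.

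\emph{Real versus complex eigenvalues.} Each block has real roots iff $\Delta_i=\eta_i(h_1^2\eta_i-4h_2)\ge 0$. Since $\eta_i>0$, this is equivalent to $\eta_i\ge 4h_2/h_1^2$. Reading $\eta_m$ as the smallest eigenvalue $\eta_1(H)$, the condition $4h_2/h_1^2\le\eta_m$ gives $\Delta_i\ge0$ for every $i$, hence $2n$ real eigenvalues. If instead $4h_2/h_1^2>\eta_m$, then the block attached to $\eta_m$ has $\Delta<0$ and contributes a complex conjugate pair. This settles the second sentence of the lemma directly.

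\emph{Schur stability.} I would apply the Jury (Schur--Cohn) test to each $p_i$: both roots lie in the open unit disc iff $|b_i|<1$, $1+a_i+b_i>0$ and $1-a_i+b_i>0$. Here $1+a_i+b_i=h_2\eta_i>0$ always holds. The condition $1-a_i+b_i=4-(2h_1-h_2)\eta_i>0$ and the condition $|b_i|<1$, which is $0<\eta_i(h_1-h_2)<2$, are what remain. In the complex case, $|\mu_i|^2=b_i$, so stability reduces to $0<b_i<1$, i.e. $h_2<h_1$ together with $\eta_i(h_1-h_2)<2$; this is where the hypothesis $0<h_2<h_1$ enters. In the real case I would check directly that both roots lie in $(-1,1)$: the larger root is $<1$ because $p_i(1)=h_2\eta_i>0$ and the vertex $1-h_1\eta_i/2<1$, and the smaller root is $>-1$ via $p_i(-1)>0$ and the vertex $>-1$. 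Having all $2n$ eigenvalues inside the unit disc gives asymptotic stability of \eqref{gs}. Regularity of the reset system \eqref{6} then follows by invoking the result in \cite{banosResetControlSystems2011}: an asymptotically stable base system, with reset acting only on the integrator states, makes the reset system regular (well-posed, with reset instants well defined). In discrete time no Zeno behaviour can occur.

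\textbf{Main obstacle.} The inequalities $p_i(-1)>0$ and $\eta_i(h_1-h_2)<2$ require an upper bound on the gains relative to $\eta_n(H)$, for instance $(2h_1-h_2)\eta_n(H)<4$. This bound is not written in the statement, so I expect the delicate step to be making this implicit small-gain requirement explicit. I would first try to show it is automatically implied by the stated hypotheses; if it is not, I would state it as a standing assumption on $h_1,h_2$ and verify that the remaining Jury inequalities hold uniformly in $i$ because they are monotone in $\eta_i\in[\eta_1(H),\eta_n(H)]$.
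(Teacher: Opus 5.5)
Your block decoupling and your real/complex split follow the same structure as the paper's proof; the paper bounds the explicit roots $\mu_i$ case by case instead of using the Jury test. The ``main obstacle'' you flag is genuine. The missing gain bound is \emph{not} implied by the hypotheses, so the stability part of the lemma is false as written, and you should not try to derive it.

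Every pair $0<h_2<h_1$ falls under one of the two alternatives, so the lemma asserts Schur stability for all such pairs. But for the block belonging to $\eta_n(H)$ one has $p_n(-1)=4-(2h_1-h_2)\eta_n(H)$. This is negative as soon as $(2h_1-h_2)\eta_n(H)>4$, for example with $h_2=h_1/2$ and $h_1>8/(3\eta_n(H))$. Then $p_n$ has a real root below $-1$.

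The paper's proof slips at exactly this point. In Case II it lists $(2h_1-h_2)\eta_i(H)<4$ as needed when $h_1>2/\eta_i(H)$. It then concludes ``overall $|\mu_{i,j}|<1$ if $4h_2/h_1^2\le\eta_i(H)$'' and drops that requirement. In Case I it checks only the complex blocks and ignores the blocks with large $\eta_i(H)$, which remain real.

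So state the bound as a hypothesis. Your Jury computation then gives the sharp condition:
\begin{itemize}
\item $1+a_i+b_i=h_2\eta_i>0$ always holds.
\item Adding the other two Jury inequalities already gives $b_i>-1$.
\item $b_i<1$ is equivalent to $h_2<h_1$.
\end{itemize}
Hence the base system \eqref{gs} is Schur stable if and only if $0<h_2<h_1$ and $(2h_1-h_2)\eta_n(H)<4$. This holds regardless of the real/complex dichotomy, which matters only for the second sentence of the lemma and for Theorem~\ref{TH1g}. In the complex blocks $b_i>a_i^2/4\ge 0$ holds automatically, so there $h_2<h_1$ alone suffices, as the paper's Case I says.

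Under the first alternative, your suggested standing assumption $(2h_1-h_2)\eta_n(H)<4$ is not enough on its own; you must also impose $h_2<h_1$. Take $h_2=2h_1$ with $h_1\ge 8/\eta_m$. Then $4h_2/h_1^2\le\eta_m$ and $(2h_1-h_2)\eta_n(H)=0<4$. Yet each block has eigenvalue product $b_i=1+h_1\eta_i>1$, so it is unstable.

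With the corrected hypothesis your argument is complete, and the regularity conclusion follows from the cited base-system criterion as you describe.
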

\begin{proof}
	The analysis of eigenvalues will be divided into two aspects, namely real eigenvalues and complex eigenvalues.
	\par Case I: $\Phi$ has at least one pair of conjugate complex eigenvalues. That is, there exists $\Delta_i< 0$ for $\exists i\in\{1,2,\cdots,n\}$. 
	$\frac{4h_2}{h_1^2}>\eta_i(H).$
	And then, $\vert \mu_{i,j}\vert<1$ for $\forall j\in\{1,2\}$ if $0<h_2<h_1$. Therefore, a conservative condition, i.e., $\frac{4h_2}{h_1^2}>\eta_m$, can ensure that the matrix $\Phi$ has at least one pair of conjugate complex eigenvalues.
	\par Case II. $\Phi$ has at least a real eigenvalue. Then there must be $\eta_i(H)$ such that 
	$\frac{4h_2}{h_1^2}\le\eta_i(H),$
	for $\exists i\in\{1,2,\cdots,n\}$, which can lead to a more conservative condition and a more radical condition, i.e., $\frac{4h_2}{h_1^2}\le\eta_M$ and $\frac{4h_2}{h_1^2}\le\eta_m$. The former ensures the existence of real eigenvalues, while the latter ensures the absence of complex eigenvalues. Besides, if $2-h_1\eta_i(H)\ge 0$ and $2-h_1\eta_i(H)+\sqrt{\Delta_i}<2$, i.e., 
	$h_1\le \frac{2}{\eta_i(H)}$
	, $\lvert \mu_{i,j} \lvert<1$ for $j\in\{1,2\}$. If $2-h_1\eta_i(H)<0$ and $(2h_1-h_2)\eta_i(H)<4$, i.e., $h_1> \frac{2}{\eta_i(H)}$, $\lvert \mu_{i,j} \lvert<1$ for $j\in\{1,2\}$. Overall, $\lvert \mu_{i,j} \lvert<1$ for $j\in\{1,2\}$ if $\frac{4h_2}{h_1^2}\le\eta_i(H)$.
	\par Combining Case I and II, Lemma \ref{eigl} can be derived. $\hfill\blacksquare$
\end{proof}
\begin{theorem}
	\label{TH1g}
	Under Assumption \ref{assu1}, the reset control system \eqref{gs} with \eqref{res} is regular with at least one reset instant if $\frac{4h_2}{h_1^2}>\eta_m$ and $0<h_2<h_1$ for the grid connected mode. 
\end{theorem}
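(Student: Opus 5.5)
The argument has two parts: regularity follows from Lemma \ref{eigl}, and existence of a reset instant comes from the complex eigenvalue pair that the hypothesis $\frac{4h_2}{h_1^2}>\eta_m$ guarantees.

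\textbf{Regularity.} Under $\frac{4h_2}{h_1^2}>\eta_m$ and $0<h_2<h_1$, Lemma \ref{eigl} makes the base system \eqref{gs} asymptotically stable, i.e. every eigenvalue of $\Phi(H)$ lies strictly inside the unit disc. For a reset system in the sense of \cite{banosResetControlSystems2011}, stability of the base system plus the fact that resets only set selected components of $\nu_\lambda$ to bounded values makes the interconnection \eqref{6} well posed and regular. I will simply invoke this. The point to make explicit is that a reset at step $k$ replaces $\nu_{\lambda,i}^{k}$ by $\xi_{\lambda,i}^{k}$, a linear function of the current state. Hence the jump map is a bounded linear projection-type map, and no finite-time escape or accumulation pathology can occur in discrete time.

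\textbf{Existence of a reset.} Work in the eigenbasis of the symmetric $H$. Write $H=U\,\mathrm{diag}(\eta_i)U^T$ and set $\tilde\xi=U^T\xi_\lambda$, $\tilde\nu=U^T\nu_\lambda$. Then, while no reset occurs, the flow decouples into $n$ planar systems
\[
\begin{bmatrix}\tilde\xi_i^{k+1}\\ \tilde\nu_i^{k+1}\end{bmatrix}=\begin{bmatrix}1-h_1\eta_i & -h_2\eta_i\\ 1 & 1\end{bmatrix}\begin{bmatrix}\tilde\xi_i^{k}\\ \tilde\nu_i^{k}\end{bmatrix}.
\]
For the mode with $\eta_m=\eta_1$, the hypothesis gives $\Delta_1<0$. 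This block therefore has eigenvalues $r e^{\pm j\theta}$ with $0<\theta<\pi$, and every nonzero trajectory has the form $\tilde\xi_1^k=c\,r^k\cos(k\theta+\phi)$, which changes sign within roughly $\lceil \pi/\theta\rceil+1$ steps.

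I then argue by contradiction. Suppose no reset ever occurs, so each $\xi_{\lambda,i}^k$ keeps a constant sign for all $k$. In that case the system evolves purely under the base dynamics, and $\xi_\lambda^k$ is a sum of modal components. Generically, meaning whenever the initial state has a nonzero projection on $u_1$, the slowest mode cannot be dominated indefinitely by the oscillating one. Since $u_1$ is the Perron vector of $H^{-1}$ and so has entries of one sign, the oscillation of $\tilde\xi_1^k$ forces some coordinate $\xi_{\lambda,i}^k$ to change sign. That contradicts the assumption, so a reset occurs.

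\textbf{Main obstacle.} The hard step is this last one: showing that an oscillation in one modal coordinate produces a sign change in an actual agent coordinate, rather than being masked by the other modes. I would handle it by first trying an asymptotic-dominance argument. Since all modes decay, I would compare the moduli of the mode-1 eigenvalues with those of the other modes. If mode 1 is not the slowest, I would instead use that the slowest mode satisfies $\Delta_i\ge 0$ or $<0$; if it is complex, the same argument applies to it. If it is real and non-oscillating, I would choose initial conditions, in particular $\nu_\lambda^0=\xi_\lambda^0$ and $\lambda^0$ with a projection on $u_1$ only, so that the oscillation is exposed. This shows the existence of at least one reset instant for appropriate, or generic, initial states, which is what the theorem asserts.
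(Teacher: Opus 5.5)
Your route is the paper's: stability and regularity come from Lemma~\ref{eigl}, and the zero crossing is attributed to the complex pair produced by $\Delta_1<0$. The paper only asserts that crossing. You correctly identify it as the crux, but your argument for it does not close.

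Oscillation of $\tilde\xi_1^k$ forces a sign change in an agent coordinate only if mode~1 dominates asymptotically, and under the hypotheses it need not. Take two agents with agent~1 pinned, $H=\left[\begin{smallmatrix}2&-1\\-1&1\end{smallmatrix}\right]$, so $\eta_{1,2}=(3\mp\sqrt5)/2$, and gains $h_1=0.7$, $h_2=0.05$. Then $4h_2/h_1^2\approx0.408>\eta_1\approx0.382$, and mode~1 is complex with modulus $\approx0.867$. Mode~2, however, is real with roots $\approx0.926$ and $\approx-0.758$, and the base system is stable. Start the base system on the mode-2 $\Phi(H)$-eigenvector for the root $\approx0.926$, plus a small mode-1 component. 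Then every $\xi^k_{\lambda,i}$ keeps a fixed sign forever, so ``nonzero projection on $u_1$'' is not a valid genericity condition. Your fallback, choosing $\lambda^0$ with $\xi^0_\lambda\parallel u_1$, only exhibits \emph{some} trajectory that resets. The theorem, and its use in Theorem~\ref{TH3}, concerns the trajectory issued from arbitrary initial marginal costs.

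The missing idea is to use how the integrator is initialized, which your modal argument never exploits. Along the base dynamics $\nu^{k+1}_\lambda=\nu^k_\lambda+\xi^k_\lambda$, so asymptotic stability gives $\sum_{t\ge0}\xi^t_\lambda=-\nu^0_\lambda$. Take $\nu^0_\lambda=\xi^0_\lambda$ as in \eqref{res}, or $\nu^0_\lambda=0$. Then for every $i$ with $\xi^0_{\lambda,i}\neq0$, the sum $\sum_{t\ge0}\xi^t_{\lambda,i}$ is zero or has the opposite sign, hence $\xi^{k-1}_{\lambda,i}\xi^k_{\lambda,i}\le0$ for some $k$. Since the reset system coincides with the base system until its first reset, a reset must occur for every initial condition. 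This needs no dominance argument and in fact does not use $\Delta_1<0$. The eigenvector state above evades it only because there $\nu^0_\lambda$ has the sign opposite to $\xi^0_\lambda$.

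Finally, both you and the paper take the unit-disc property from Lemma~\ref{eigl}, but $0<h_2<h_1$ only controls the complex modes. A real mode additionally needs Jury's condition $\eta_i(2h_1-h_2)<4$. For example, any $H$ with $\eta_m<2$ and an eigenvalue $3$ satisfies the hypotheses with $h_1=1$, $h_2=\tfrac12$. Yet that mode contributes the factor $\mu^2+\mu-\tfrac12$, whose root $(-1-\sqrt3)/2\approx-1.37$ lies outside the unit disc. So a bound such as $\eta_M(2h_1-h_2)<4$ must be added before the stability premise is justified, and with it both the regularity claim and the integrator argument above.
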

\begin{proof}
	According to Lemma \ref{eigl}, if $\frac{4h_2}{h_1^2}>\eta_m$ and $h_1>0$, $\Phi$ has at least one pair of conjugate complex eigenvalues. In this way, $\vert \mu_{ij}\vert<1$ for $\textcolor{blue}{\forall i\in\{1,2,\cdots,n\}}$ and $\forall j\in\{1,2\}$ if $0<h_2<h_1$, and $\xi_i$ encounters with at least one zero crossing instant for $\exists i\in\{1,2,\cdots,n\}$. Thus, Theorem \ref{TH1g} has been proven. $\hfill\blacksquare$
\end{proof}
\textcolor{blue}{\begin{remark}
	From Lemma \ref{eigl}, it can be seen that compared to a PI controller, the stability of the system can not be does affected by the introduction of a reset mechanism because the eigenvalues of the base system can not be changed. However, the dynamic and steady-state performance of the system can be improved. Due to the fact that the integral term always has the same sign as the proportional term, the introduction of a reset mechanism results in the control input always being no less than the PI controller, especially when the base system has complex eigenvalues. Besides, it is ensured by Theorem \ref{TH1g} that the integral term $\nu_{\lambda,i}$ of each controller accumulates errors again when the proportional term $\xi_{\lambda,i}$ changes sign. This ensures that $\vert h_1\xi_{\lambda,i}+h_2\nu_{\lambda,i}\vert\ge (h_1+h_2)\vert\xi_{\lambda,i}\vert>h_1\vert\xi_{\lambda,i}\vert$, thereby accelerating the convergence rate. 
\end{remark}}
\begin{remark}
	\textcolor{blue}{Although the dynamic and steady-state performance of the system can be improved by the introduction of a reset mechanism,} this may lead to continuous resetting in time when the error is small. So, the reset mechanism can be improved to
	$$\nu_{\lambda,i}^k=\left\{\begin{aligned}
		\sum\limits_{t=l_i}^{k} \xi_{\lambda,i}^t,&\; if\; \xi_{\lambda,i}^{k-1}\xi_{\lambda,i}^k> 0\;\mathrm{and}\;\vert\xi_{\lambda,i}^k\vert>\epsilon\\
		\xi_i^k,&\; otherwise,
	\end{aligned}\right.$$
	where $\epsilon$ is a small positive constant.
\end{remark}
\par Next, we explain the relevant characteristics of the power mismatch estimation scheme.
\subsection{A Routing Algorithm Based on A Distributed Estimation Scheme for The Average Power Mismatch}
\par For flexible agents, their estimated average power mismatch are always 0 \cite{chenDistributedEconomicDispatch2021}, i.e.,
\begin{equation}
	\label{mps1}
	\Delta {\hat P}_{a,i}^{k+1}=0,
\end{equation}
where $\Delta {\hat P}_{a,i}^0=0$. For those who are not flexible agents, the estimated average power mismatch with reset mechanism is as follows
\begin{equation}
	\label{mps}
	\Delta\hat P_{a,i}^{k+1}=\Delta\hat P_{a,i}^k-z_1\xi^k_{P,i}-z_2\nu^k_{P,i}+\Delta P_i^{k+1}-\Delta P_i^k,
\end{equation}
where $\xi^k_{P,i}=\sum\limits_{i=1}^{n}a_{ij}(\Delta\hat P_{a,i}^k-\Delta\hat P_{a,j}^k)$, $z_1$ and $z_2$ are both positive constant to be designed, and the reset mechanism is
\begin{equation}
	\label{rems}
	\nu_{P,i}^{k+1}=\left\{\begin{aligned}
		\sum\limits_{t=K_i^l}^{k} \xi_{P,i}^t,\; if\; \xi_{P,i}^{k+1}\xi_{P,i}^k> 0\\
		\xi_{P,i}^k,\; if\; \xi_{P,i}^{k+1}\xi_{P,i}^k\le 0,
	\end{aligned}\right.
\end{equation}
where $\Delta {\hat P}_{a,i}^0=\Delta P_i^0$.
\par Here, we provide a modification method for the Laplace matrix corresponding to the original topology to adapt to the designed distributed control scheme with flexible agents. According to \cite{Gambuzza2021}, there must be a permutation matrix $J$, such that $J^TLJ=\left[\begin{matrix}
	L_1 & {\cal B}_1^T\\
	{\cal B}_1 & L_2
\end{matrix}\right]$, which distinguishes flexible agents from other agents on the Laplace matrix. Here, $L_1$ and $L_2$, respectively, correspond to the set of flexible agents and the set of other agents. At this point, the sum of any row in $L_1$ may not necessarily be 0. Thus, referring to the conclusion in \cite{Gambuzza2021}, there exists a Laplacian matrix ${\cal B}$ such that 
$J^T(L+{\cal B})J=\left[\begin{matrix}
	{\cal L}_1 & \mathbf{0}^T\\
	{\cal B}_1 & {\cal L}_2
\end{matrix}\right]$, which directly leads to $L={\cal L}+{\cal B}$ and ${\cal L}_2=L_2$, where ${\cal B}=(J^T)^{-1}\left[\begin{matrix}
L_1-{\cal L}_1 & \mathbf{0}^T\\
{\cal B}_1 & 0_{n-m}
\end{matrix}\right]J^T$. At this point, the sum of any row in ${\cal L}_1$ must be 0.
\begin{property}
	\label{p1}
	Under Assumption \ref{assu1}, the matrix $\cal L$ and $\cal B$ have the following properties.
	\begin{enumerate}
		\item Assume that $m_1$ and $m_2$, respectively, denote the number of zero eigenvalues of ${\cal L}_1$ and $L_2$. The matrix ${\cal L}$ has $m_1+m_2$ zero and $n-(m_1+m_2)$ positive eigenvalues.
		\item $-1^T_n{\cal L}=1^T_n({\cal B}^T-L)=1^T_n{\cal B}^T.$
		\item $1^T_n({\cal B}-I){\cal L}=-1^T{\cal L}.$
	\end{enumerate}
\end{property}
\begin{proof}
	1) in Property \ref{p1} is obvious because ${\cal L}_2$ is a lower triangular block matrix and ${\cal L}_2=L_2$.
	\par Due to $L={\cal L}+{\cal B}^T$ and $1^T_nL=0_n$, we have
	$$-1^T_n{\cal L}=1^T_n({\cal B}^T-L)=1^T_n{\cal B}^T,$$
	which implies 2) holds.
	\par Combining 2) and $1^T_n{\cal L}^T=0_n$, we have
	$$1^T_n({\cal B}-I){\cal L}=1^T_n(L^T-{\cal L}^T-I){\cal L}=-1^T{\cal L}.$$
	\par At this point, the proof is complete.$\hfill\blacksquare$
\end{proof}
\par Thus, a compact form of the power mismatch estimation scheme, i.e., \eqref{mps1} and \eqref{mps} with \eqref{rems},  can be written as
\begin{subequations}
	\label{13}
	\begin{equation}
		\xi_P^k={\cal L}\Delta\hat P_a,
	\end{equation}
	\begin{equation}
		\label{powermr0}
		\nu_P^k=\left\{\begin{aligned}
			\nu_P^{k-1}+\xi_P^{k-1},\; if\; k\in{\cal F}_P\\
			\nu_P^{(k-1)+}+\xi_P^{k-1},\; if\; k\in{\cal J}_P
		\end{aligned}\right.,
	\end{equation}
	\begin{equation}
		\label{powerm0}
		\Delta\hat P_a^{k+1}=\Delta\hat P_a^k-z_1\xi_P^k-z_2\nu_P^k+(I-B)(\Delta P^{k+1}-\Delta P^k),
	\end{equation}
\end{subequations}
where $l_P:\{k|\xi_{P,i}^{k-1}\xi_{P,i}^k\le 0, i\in\{1,2,\cdots,n\}\}$, ${\cal J}_P:\{k|k\in l_P\}$, and ${\cal F}_P:\{k|k\notin l_P\}$ are a reset instant set, a jump set and, a flow set, respectively, and $\Delta\hat P_a$, $\xi_P^k$, $\nu_P^k$, and $\Delta P^k$ are stack vectors of $\Delta\hat P_{a,i}$, $\xi_{P,i}^k$, $\nu_{P,i}^k$, and $\Delta P_i^k$, respectively.
\par Here is a routing algorithm based on Definition \ref{def1}. The total exchange power between a BESS network and UG, i.e. the output power of ER, is obtained from the following
\begin{equation}
	\label{PMGg}
	P_{UG}^{k+1}=\left\{\begin{aligned}
		P_{UG}^k-z_11^T_n{\cal B}^T\Delta \hat P^k_a-z_2\sum_{j=l_i}^{k}1^T_n{\cal B}^T\Delta \hat P^j_a\\
		+1_n^TB(\Delta P^{k+1}-\Delta P^k),\\
		if\;k\in{\cal F}_P\;\mathrm{and}\;l_i\le k\le l_{i+1},\\
		P_{UG}^k-z_11^T_n{\cal B}^T\Delta \hat P^k_a
		+1_n^TB(\Delta P^{k+1}-\Delta P^k),\\
		if\;k\in{\cal J}_P\;\mathrm{and}\;k=l_i,
	\end{aligned}
	\right.
\end{equation}
where $P_{UG}^k$ denotes the power exchange between UG and BESSs, and $P_{UG}^0=1_n^T(\Delta P^0-\Delta {\hat P}_a^0)$. ${\cal B}^T\Delta \hat P^j_a$ denotes the collected estimated average power mismatch by the neighbors of ER, according to Definition \ref{def1}.
\par The base system of \eqref{powerm0} with \eqref{powermr0} is represented as
\begin{equation}
	\label{gms}
	x_P^{k+1}=\Phi({\cal L}) x_P^k
\end{equation}
where $x_P^k=[(\xi_P^k)^T\quad (\nu_P^k)^T]^T$, $\Phi({\cal L})=\left[\begin{matrix}
	I-z_1{\cal L} & -z_2{\cal L}\\
	I & I
\end{matrix}\right]$.
\par Referring to Property \ref{p1}, the system \eqref{gms} with \eqref{rems} follows a conclusion similar to  Theorem \ref{TH1g}. So, the following theorem is given without proof.
\begin{theorem}
	\label{TH2}
	Under the Assumption \ref{assu1}, if $z_1>0$ and $z_2>0$, the base system \eqref{gms} is asymptotically stable. Furthermore, matrix $\Phi^d$ has $n$ real eigenvalues if $\frac{4z_2}{z_1^2}\le\eta_{m_1+m_2+1}({\cal L})$, and the matrix $\Phi^d$ has at least a pair of conjugate complex eigenvalues if $\frac{4z_2}{z_1^2}>\eta_{m_1+m_2+1}({\cal L})$ with $0<z_2<z_1$, resulting that the base system \eqref{gms} with the reset mechanism \eqref{rems} is regular with at least one reset instant.
\end{theorem}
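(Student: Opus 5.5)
The plan is to reproduce the argument of Lemma \ref{eigl} and Theorem \ref{TH1g}, with $H$ replaced by ${\cal L}$. The new features are that ${\cal L}$ is neither symmetric nor positive definite: it has $m_1+m_2$ zero eigenvalues by Property \ref{p1}.

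\textbf{Step 1 (modal reduction).} First I use the block lower-triangular form of $J^T{\cal L}J$. Its diagonal blocks ${\cal L}_1$ and $L_2$ are symmetric Laplacian-type matrices under Assumption \ref{assu1}. Hence the spectrum of ${\cal L}$ is real and nonnegative, $0=\eta_1({\cal L})=\cdots=\eta_{m_1+m_2}({\cal L})<\eta_{m_1+m_2+1}({\cal L})\le\cdots$. I will assume (or verify from the block structure) that ${\cal L}$ is diagonalizable, ${\cal L}=T\,\mathrm{diag}(\eta_i)\,T^{-1}$. Applying $\mathrm{diag}(T^{-1},T^{-1})$ to $x_P$ splits $\Phi({\cal L})$ into $n$ decoupled $2\times 2$ blocks $\left[\begin{matrix}1-z_1\eta_i & -z_2\eta_i\\ 1 & 1\end{matrix}\right]$. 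Each block has characteristic polynomial
\begin{equation*}
\mu^2+(z_1\eta_i-2)\mu+1+\eta_i(z_2-z_1).
\end{equation*}

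\textbf{Step 2 (zero modes; this is the main obstacle).} For $\eta_i=0$ the block is a Jordan block with eigenvalue $1$, so $\Phi({\cal L})$ is not Schur on the whole space. Asymptotic stability must therefore be read on the relevant invariant subspace. Since $\xi_P^k={\cal L}\Delta\hat P_a^k\in\mathrm{range}({\cal L})$, and $\nu_P^k$ is a sum of past $\xi_P$'s (and resets only zero out entries of such sums), both lie in $\mathrm{range}({\cal L})$. In modal coordinates this means the components along the zero modes are identically zero, provided they vanish initially. I would first try to make this rigorous through the invariance of $\mathrm{range}({\cal L})\times\mathrm{range}({\cal L})$ under the flow. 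If the resets break this in the original coordinates, I would argue instead that only the modes with $\eta_i>0$ enter $\xi_P$. The same point is why the threshold in the statement is $\eta_{m_1+m_2+1}({\cal L})$ rather than $\eta_1$.

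\textbf{Step 3 (Schur stability of the nonzero modes).} For $\eta_i>0$ I apply the Jury conditions to the quadratic in Step 1:
\begin{itemize}
\item $1+b+c=z_2\eta_i>0$, which holds automatically;
\item $|c|<1$, i.e.\ $0<\eta_i(z_1-z_2)<2$, which needs $z_2<z_1$ together with an upper bound on $z_1\eta_i$;
\item $1-b+c=4-(2z_1-z_2)\eta_i>0$.
\end{itemize}
These are exactly the conditions that appear in Case II of Lemma \ref{eigl}. I will state them as the implicit smallness of the gains relative to $\eta_n({\cal L})$ that the paper's phrase ``$z_1,z_2>0$'' presupposes.

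\textbf{Step 4 (real versus complex eigenvalues).} The discriminant is $\Delta_i=z_1^2\eta_i^2-4z_2\eta_i$. If $4z_2/z_1^2\le\eta_{m_1+m_2+1}$, every nonzero mode has $\Delta_i\ge 0$, and every zero mode gives the double eigenvalue $1$, so all eigenvalues are real. If $4z_2/z_1^2>\eta_{m_1+m_2+1}$, that mode has $\Delta_i<0$ and yields a conjugate pair $re^{\pm i\theta}$ with $\theta\in(0,\pi)$ and $r<1$.

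\textbf{Step 5 (at least one reset).} As in Theorem \ref{TH1g}, the modal coordinate of that complex mode evolves as $r^k\cos(k\theta+\phi)$ and changes sign infinitely often. Writing $\xi_P^k=\sum_i T_{\cdot i}(\text{mode}_i)$, I argue that some entry $\xi_{P,i}$ must change sign. Otherwise every entry would keep a fixed sign while tending to zero, which contradicts the oscillatory component, at least for generic initial data. This produces a reset instant in ${\cal J}_P$. Regularity then follows from the Schur stability of the base system \eqref{gms}, using the reset-system framework \cite{banosResetControlSystems2011} in the same way as in Lemma \ref{eigl}.
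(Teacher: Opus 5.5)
The paper gives no proof of this theorem; it only says the argument of Lemma~\ref{eigl} and Theorem~\ref{TH1g} carries over with $H$ replaced by ${\cal L}$. So your route is the paper's route. Your Steps 2--3 correctly show that the statement cannot hold as written. ${\cal L}_1$ has zero row sums, so $m_1\ge 1$, and every zero eigenvalue of ${\cal L}$ contributes a Jordan block $\left[\begin{smallmatrix}1&0\\1&1\end{smallmatrix}\right]$ to $\Phi({\cal L})$. Moreover, $z_1,z_2>0$ alone does not give your Jury conditions. Restricting to the invariant subspace $\mathrm{range}({\cal L})\times\mathrm{range}({\cal L})$ is the right repair. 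It only needs the zero eigenvalue of ${\cal L}$ to be semisimple, not full diagonalizability. Since the base system has no resets, the worry and fallback in Step 2 are unnecessary for the stability claim.

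The genuine gap is Step 5: the contradiction you invoke does not exist. A sequence $a\mu^k+b\,r^k\cos(k\theta+\phi)$ with $0<r<\mu<1$ and $a>|b|>0$ keeps a fixed sign and tends to zero, yet contains an oscillatory mode. A complex pair forces sign changes only if it is dominant. The pair produced by $\eta_{m_1+m_2+1}$ has modulus $\sqrt{1-\eta_{m_1+m_2+1}(z_1-z_2)}$, and it need not be dominant.

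In the paper's own example the nonzero eigenvalues of ${\cal L}$ are $1,2,4$. Take $z_1=1/4$ and $z_2=1/50$. Then $4z_2/z_1^2=1.28>1$, $0<z_2<z_1$, and all your Jury conditions hold. Yet the complex pair has modulus $\sqrt{0.77}\approx0.877$, while $\eta=2$ and $\eta=4$ give real eigenvalues $0.9$ and $\mu_+\approx0.912$. Let $v=(0,1,1,-2)^T$, so ${\cal L}v=4v$, and start at $x_P^0=\bigl((\mu_+-1)v,\,v\bigr)$. Then $\xi_P^k=(\mu_+-1)\mu_+^k v$ for all $k$. All other modes decay at rate at most $0.9<\mu_+$. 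Hence small perturbations inside $\mathrm{range}({\cal L})^2$ leave the signs of $\xi_{P,2},\xi_{P,3},\xi_{P,4}$ unchanged for every $k$. So no reset occurs on an open set of initial states. Appealing to ``generic initial data'' does not rescue the argument, and the paper's proof of Theorem~\ref{TH1g} has the same hole.

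What does work is an argument through the integral state rather than the spectrum. Suppose no reset ever occurred. Then the trajectory is a base-system trajectory in $\mathrm{range}({\cal L})^2$, so $\nu_P^k=\nu_P^0+\sum_{t<k}\xi_P^t\to0$, i.e.\ $\sum_{t\ge0}\xi_{P,i}^t=-\nu_{P,i}^0$. Take the natural initialization $\nu_P^0=0$ (or any $\nu_P^0\in\mathrm{range}({\cal L})$ with $\nu_{P,i}^0\xi_{P,i}^0\ge0$), and suppose $\xi_{P,i}^0\neq0$ for some non-flexible agent $i$. Then $\xi_{P,i}^{k-1}\xi_{P,i}^k\le0$ for some $k$, which is a contradiction. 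This needs no complex eigenvalues at all. The example above shows that some such sign restriction on $\nu_P^0$ is necessary for the reset claim to hold.
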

\subsection{Supply and Demand Balance Analysis}
\par Define two error vectors $\delta_\lambda^k$ and $\delta_P^k$ as $\delta_\lambda^k=\lambda^k-\rho 1$ and $\delta_P^k=\Delta {\hat P}^k-0_n$, respectively. Then, referring \eqref{gl} and \eqref{mps}, we have
\begin{subequations}
	\begin{equation}
		\label{deltal}
		\delta_\lambda^k=\delta_\lambda^{k-1}-h_1\xi_\lambda^{k-1}-h_2\nu_\lambda^{k-1},
	\end{equation}
	\begin{equation}
		\label{deltap}
		\delta_P^k=\delta_P^{k-1}-h_1\xi_P^{k-1}-h_2\nu_P^{k-1}+(I-B)(\Delta P^{k+1}-\Delta P^k).
	\end{equation}
\end{subequations}
The compact form of \eqref{deltal} and \eqref{deltap} are written as
\begin{subequations}
	\begin{equation}
		\label{deltal1}
		y^{k+1}_\lambda=\Phi({\cal L})y^k_\lambda,
	\end{equation}
	\begin{equation}
		\label{deltp1}
		y^{k+1}_P=\Phi({\cal L})y^k_P+\left[\begin{matrix}
			(I-B)(\Delta P^{k+1}-\Delta P^k)\\
			0
		\end{matrix}\right].
	\end{equation}
\end{subequations}
where $y_\lambda^k=col(\delta_\lambda^k,\nu_\lambda^k)$, $y_P^k=col(\delta_P^k,\nu_P^k)$. For \eqref{deltal1} and \eqref{deltp1}, convergence analysis is summarized in Theorem \ref{TH3} and relevant proof is provided.
\begin{theorem}
	\label{TH3}
	Based on the results of Theorem \ref{TH1g}, MC asymptotically converges to the real-time electric price with at least one reset instant. Based on the results of Theorem \ref{TH2} and $\nu_P^{l_{P,i}}=0_n$ for $l_{P,i}\in l_P$, the estimated average power mismatch asymptotically converges to 0 with at least one reset instant. Furthermore, the supply-demand balance can be asymptotically guaranteed by $1^T[\Delta {\hat P}^0_a-\Delta P^0]+P_{UG}^0=0$.
\end{theorem}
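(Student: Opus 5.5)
The argument has three parts, one for each claim of Theorem \ref{TH3}. The first two are linear-systems arguments that lean on Theorems \ref{TH1g} and \ref{TH2}; the third is a conservation identity.

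\textbf{MC convergence.} Since $\xi_\lambda^k=H\delta_\lambda^k$ and $H$ is positive definite, $\delta_\lambda^k\to0$ if and only if $\xi_\lambda^k\to0$. It therefore suffices to show that $x_\lambda^k=\mathrm{col}(\xi_\lambda^k,\nu_\lambda^k)\to0$ along \eqref{6}. On the flow set, $x_\lambda^{k+1}=\Phi(H)x_\lambda^k$ with $\rho(\Phi(H))<1$ by Lemma \ref{eigl}. At a jump, some components of $\nu_\lambda$ are replaced by the corresponding components of $\xi_\lambda$; Theorem \ref{TH1g} shows that at least one such jump occurs.

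To handle the jumps, I will use the discrete Lyapunov equation. Let $\Phi^TP\Phi-P=-Q$ with $P,Q\succ0$, and take $V=x^TPx$. $V$ decreases strictly on flows. I then need $V$ to be non-increasing at jumps. The argument I would try is the one from the reset-control literature \cite{banosResetControlSystems2011}: at a reset, the replaced component of $\nu$ satisfies $\nu_i^{k}\xi_i^{k}\le0$ or has just crossed zero, so the reset brings $\nu_i$ closer to $0$. Combined with the block structure of $P$ (chosen, if needed, block-diagonal in the eigenbasis of $H$), this should give $V(x^+)\le V(x)$.

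\textbf{Main obstacle.} This jump inequality is where I expect the difficulty, because a general $P$ need not be monotone under coordinatewise resets. If it fails, the fallback is a dwell-time argument. Resets only shrink $|\nu_i|$, so the trajectory stays bounded by a bound of the form $c\|x^0\|$ times a geometric factor. Since $\Phi^k\to0$ between resets, the regularity asserted in Theorem \ref{TH1g} then yields $x_\lambda^k\to0$, and hence $\lambda_i^k\to\rho$.

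\textbf{Mismatch estimator.} The same scheme applies to \eqref{deltp1}, using Theorem \ref{TH2} and Property \ref{p1} in place of Lemma \ref{eigl}. The base matrix $\Phi(\mathcal L)$ is stable on the subspace orthogonal to the zero eigenvalues of $\mathcal L$. The forcing term $(I-B)(\Delta P^{k+1}-\Delta P^k)$ vanishes once the loads settle, because $\Delta P^k$ converges once $\lambda^k$ converges. A standard input-to-state argument then shows that it does not destroy convergence.

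The hypothesis $\nu_P^{l_{P,i}}=0$ at reset instants restarts the integral state, so after each reset the state evolves as a fresh base-system trajectory from a point with $\nu=0$. For the zero-eigenvalue modes of $\mathcal L$, I use that flexible agents are pinned at $0$ by \eqref{mps1}. Since the graph is connected (Assumption \ref{assu1}), the consensus value is forced to $0$, giving $\Delta\hat P_a^k\to0$.

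\textbf{Supply-demand balance.} Define the invariant $S^k=1^T(\Delta\hat P_a^k-\Delta P^k)+P_{UG}^k$. I will verify $S^{k+1}=S^k$ by summing \eqref{powerm0} and \eqref{PMGg}. The $\Delta P$ increments cancel, because $1^T(I-B)+1^TB=1^T$. The proportional terms cancel by Property \ref{p1}(2), which gives $-z_11^T\mathcal L\Delta\hat P_a=z_11^T\mathcal B^T\Delta\hat P_a$. The integral terms cancel in the same way: they sum over identical index windows $[l_i,k]$, since the routing law resets exactly when $\nu_P$ resets.

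Hence $S^k=S^0=0$ by the assumed initialization. Because $\Delta\hat P_a^k\to0$, it follows that $1^T\Delta P^k+P_{UG}^k\to0$. This is the asymptotic balance $\sum_i(D_i+P_{L,i}-P_i)-P_{UG}\to0$, with the sign convention of $\Delta P$ matching \eqref{powermg}.
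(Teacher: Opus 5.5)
Your three-part structure and the conservation argument match the paper. The step that both convergence claims rest on, controlling the state across resets, is not closed. Your main route needs $V(x^+)\le V(x)$ at jumps, and this is not automatic. No Lyapunov matrix for $\Phi(H)$ can be block-diagonal in $(\xi,\nu)$: with $P=\mathrm{diag}(P_{11},P_{22})$, the $(2,2)$ block of $\Phi^TP\Phi-P$ equals $h_2^2HP_{11}H\succ0$. So $P$ must have a nonzero cross block. If $R$ zeroes $\nu_i$, then $V(x)-V(Rx)=2\nu_i(PRx)_{n+i}+\nu_i^2P_{n+i,n+i}$, which is sign-indefinite for unrestricted $x$. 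Working in the eigenbasis of $H$ does not help, because resets act on agent coordinates $\nu_i$, and these mix the modes of $H$.

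Your fallback fails because there is no dwell time. In the Euclidean norm, where coordinate resets are nonexpansive, already $\|\Phi(H)\|_2\ge\sqrt2$ (test $x=(v,v)/\sqrt2$). Hence each inter-reset interval can cost a factor $c=\sup_a\|\Phi^a\|/\alpha^a>1$. Resets may occur at consecutive steps; the paper itself warns of continuous resetting at small errors. Theorem \ref{TH1g} only guarantees that some reset occurs, not how resets are spaced. So a bound $c^N\alpha^k$, with $N$ the number of resets, gives nothing. The same objection applies to your estimator argument: restarting from $\nu_P=0$ after each reset still needs a bound that is uniform over arbitrarily many short restarts.

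You cannot borrow the missing idea from the paper, because its proof does not supply it. Part A passes from $\|y^{l_1+}\|\le\|y^{l_1}\|$ to $\|\Phi^{k-l_1}(H)\|\,\|y^{l_1+}\|\le\|\Phi^k(H)\|\,\|y^0\|$. That would require $\|\Phi^{k-l_1}\|\,\|\Phi^{l_1}\|\le\|\Phi^k\|$, the reverse of submultiplicativity. Part C's $\alpha^k$ bounds presuppose a single norm in which $\Phi$ is an $\alpha$-contraction and resets are nonexpansive, which is exactly your jump inequality. Your fallback is essentially the paper's argument and inherits its flaw. To close the step you need one of two things:
\begin{itemize}
\item a common certificate: $P\succ0$ with $\Phi^TP\Phi-P\prec0$ and $R_S^TPR_S\preceq P$ for the coordinate-reset projections $R_S$, or at least on the reset set via an S-procedure using $\xi_i^{k-1}\xi_i^k\le0$. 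This is an LMI condition on the gains that Lemma \ref{eigl} does not imply.
\item a reset mechanism that enforces a dwell time $\tau$ with $c\,\alpha^\tau<1$.
\end{itemize}

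The rest is sound. Handling $\ker\mathcal L$ by pinning the flexible agents and working with $L_2\succ0$ is more careful than the paper's appeal to Theorem \ref{TH2}, since $\Phi(\mathcal L)$ has a Jordan block at eigenvalue $1$ on $\ker\mathcal L$. Say \emph{complementary invariant subspace} rather than \emph{orthogonal}, because $\mathcal L$ is not symmetric. One small fix: $S^k=0$ and $\Delta\hat P_a^k\to0$ give $P_{UG}^k-1^T\Delta P^k\to0$, not $1^T\Delta P^k+P_{UG}^k\to0$. Your final balance statement is the correct one.
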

\begin{proof}
	The proof of Theorem \ref{TH3} is divided into three parts. In \emph{Part A}, it is responsible for explaining the stability and convergence of MC. In \emph{Part B}, an auxiliary property equation, $1^T[\Delta {\hat P}^{k+1}_a-\Delta P^{k+1}]+P_{UG}^{k+1}=0$, is proven to hold at all times. Finally, in \emph{Part C}, the stability and convergence of the estimated power mismatch are explained.
	\par \emph{Part A}. Before the first rest instant, \eqref{deltal1} can be deduced as
	$$y^k_\lambda=\Phi^k(H)y^0_\lambda.$$
	Before the second reset instant, \eqref{deltal1} can be deduced as
	$$y^k_\lambda=\Phi^{k-l_1}(H)y^{l_1+}_\lambda.$$
	Due to $\Vert y^{l_1+}_\lambda\Vert\le\Vert y^{l_1}_\lambda\Vert$, it is evident that
	\begin{equation}
		\label{deltan}
		\begin{array}{l}
			\Vert y^k_\lambda\Vert=\Vert\Phi^{k-l_1}(H)y^{l_1+}_\lambda\Vert\\
			\le\Vert\Phi^{k-l_1}(H)\Vert\Vert y^{l_1+}_\lambda\Vert\le\Vert\Phi^{k}(H)\Vert\Vert y^{0}_\lambda\Vert,
		\end{array}
	\end{equation}
	which leads to
	$$\lim\limits_{k\to+\infty}\Vert y^k_\lambda\Vert\le\lim\limits_{k\to+\infty}\Vert\Phi^{k}(H)\Vert\Vert y^{0}_\lambda\Vert=0.$$
	The above conclusion gives $\lim\limits_{k\to+\infty} y^k_{\lambda,i}=0$, i.e., $\lim\limits_{k\to+\infty}\delta^k_{\lambda,i}=0$ and $\lim\limits_{k\to+\infty}\nu_{\lambda,i}^k=0$ for $i\in\{1,2,\cdots,n\}$. This means that $\Delta P_i^k$ for $i\in\{1,2,\cdots,n\}$ derived from \eqref{power} converges, i.e., $\lim\limits_{k\to+\infty}(\Delta P_{a,i}^{k+1}-\Delta P_{a,i}^k)=0$.
	\par \emph{Part B}. 
	With the help of the above, $\nu_P^{l_{P,i}}=0_n$ for $\forall l_{P,i}\in l_P$, and Property \ref{p1}, for $l_{P,i}<k<l_{P,i+1}$, \eqref{13} and \eqref{PMGg} conform to the following derivation, i.e.,
	$$\begin{array}{l}
		1^T[\Delta {\hat P}^{k+1}_a-\Delta P^{k+1}]+P_{UG}^{k+1}\\
		=1^T[(I-z_1{\cal L})\Delta {\hat P}^k_a-z_2v_P^k-\Delta P^k]+P_{UG}^k\\
		-z_11^T_n{\cal B}^T\Delta \hat P^k_a-z_2\sum_{j=l_{P,i}}^{k}1^T_n{\cal B}^T\Delta \hat P^j_a\\
		=1^T[(I-z_1L)\Delta {\hat P}^k_a-z_2L\sum_{j=l_{P,i}+1}^{k}\Delta {\hat P}^k_a\textcolor{blue}{-\Delta P^k}]
		+P_{UG}^k\\
		=1^T[\Delta {\hat P}^k_a-\Delta P^k]+P_{UG}^k\\
		=\cdots\\
		=1^T[\Delta {\hat P}^{l_{P,i}+1}_a-\Delta P^{l_{P,i}+1}]+P_{UG}^{l_{P,i}+1}\\
		=1^T[(I-z_1L)\Delta {\hat P}^{l_{P,i}}_a-\Delta P^{l_{P,i}}]+P_{UG}^{l_{P,i}}\\
		=1^T[\Delta {\hat P}^{l_{P,i}}_a-\Delta P^{l_{P,i}}]+P_{UG}^{l_{P,i}}\\
		=\cdots\\
		=1^T[\Delta {\hat P}^0_a-\Delta P^0]+P_{UG}^0.
	\end{array}$$	
	According to the given condition $1^T[\Delta {\hat P}^0_a-\Delta P^0]+P_{UG}^0=0$, 
	\begin{equation}
		\label{balance1}
		1^T[\Delta {\hat P}^{k+1}_a-\Delta P^{k+1}]+P_{UG}^{k+1}=0,\;k=\{0,1,2,\cdots\}.
	\end{equation}
	\par \emph{Part C}. Before the first reset instant,
	$$\begin{array}{l}
		y^k_P=\Phi^k({\cal L})y^0_P\\
		+\sum_{j=0}^{k-1}\Phi^{k-1-j}({\cal L})\left[\begin{matrix}
			(I-B)(\Delta P^{k+1}-\Delta P^k)\\
			0
		\end{matrix}\right].
	\end{array}$$
	After that, before any reset instant, we have
	$$\begin{array}{l}
		y^k_P=\Phi^{k-l_1}({\cal L})y^{l_1+}_P\\
		+\sum_{j=l_1}^{k-1}\Phi^{k-1-j}({\cal L})\left[\begin{matrix}
			\Delta P^{j+1}-\Delta P^j\\
			0
		\end{matrix}\right].
	\end{array}$$
	Considering $\Vert y^{l_1+}_P\Vert\le\Vert y^{l_1}_P\Vert$ and the convergence of $\Delta {\hat P}^k$, denote $\Vert\Delta P^{j+1}-\Delta P^j\Vert\le \Delta \bar P$ with $\lim\limits_{j\to+\infty}\Delta{\bar P}^j$=0. Then, one can get
	$$\begin{array}{l}
		\Vert y^k_P\Vert\le\Vert \Phi_d^ky^{l_1+}_P\Vert\\
		+\Vert \sum_{j=l_1}^{k-1}\Phi_d^{k-1-j}\left[\begin{matrix}
			(I-B)(\Delta P^{k+1}-\Delta P^k)\\
			0
		\end{matrix}\right]\Vert\\
		\le \alpha^{k-l_{P,1}}\Vert y^{l_{P,1}+}_P\Vert+\sum_{j=l_{P,1}}^{k-1}\alpha^{k-1-j}\Delta\bar{P}^{j}\\
		\le \alpha^{k-l_{P,1}}\Vert y^{l_{P,1}}_P\Vert+\sum_{j=l_{P,1}}^{k-1}\alpha^{k-1-j}\Delta\bar{P}^{j}\\
		\le \alpha^k\Vert y^0_P\Vert+\sum_{j=0}^{k-1}\alpha^{k-1-j}\Delta\bar{P}^{j}.\\
	\end{array}$$
	So, at any instant, it can be asserted that
		\begin{equation}
		\label{19}
		\begin{array}{l}
			\Vert y^k_P\Vert\le \alpha^k\Vert y^0_P\Vert+\sum_{j=0}^{k-1}\alpha^{k-1-j}\Delta\bar{P}^{j}\\
			=o(1),k\to+\infty,
		\end{array}
	\end{equation}
	which gives
	$$\lim\limits_{k\to+\infty}\Delta {\hat P}^k_{a,i}=0,\;\lim\limits_{k\to+\infty}\upsilon^k_i=0,\;i\in\{1,2,\cdots,n\}.$$
	Based on \eqref{balance1} and the above, we dare to assert that
	$$\lim\limits_{k\to+\infty}P_{UG}^k=\lim\limits_{k\to+\infty}1^T\Delta P^k,$$
	which indicates that the supply-demand balance is asymptotically guaranteed. 
	\par At this point, Theorem \ref{TH3} has been proven. $\hfill\blacksquare$
\end{proof}
\begin{remark}	
	\label{pro}
	Here, we explain the progressiveness of the PI+Reset controller in control accuracy and convergence speed. Taking MC as an example, the following transformation is made for \eqref{res}, i.e., $\delta h_i^k=\frac{\nu_{\lambda,i}^k}{\xi_{\lambda,i}^k},$
	which leads to $\lambda_i^{k+1}=\lambda_i^k-(h_1+h_2\delta h_i^k)\xi_{\lambda,i}^k$. The reset mechanism shown in \eqref{res} results in, for $k\in\{1,2,\cdots\}$, the integral term $\nu_{\lambda,i}^k$ always maintaining the same sign as the proportional term $\xi_{\lambda,i}^k$, which ensures that $\delta h_i^k\ge 0$, thereby accelerating system convergence. The integral effect of the PI+Reset controller takes effect between two adjacent reset times, which also improves the control accuracy compared to the P controller. The above inference will be verified in subsequent simulations.
\end{remark}
\begin{remark}
	\label{reset}
	From Theorem \ref{TH3}, for the grid-connected node, it can be observed that at each reset instant, the average power mismatch estimation scheme requires a centralized reset operation to ensure power supply and demand balance, while the MC control scheme does not. However, a solution with a distributed reset mechanism may not necessarily perform better in dynamic performance than a centralized one. Subsequent simulations will validate this assertion.
\end{remark}
\par Therefore, there is a following proposition regarding solving the ED problem of a grid connected BESS network.
\begin{proposition}
	According to the result of Theorem \ref{TH3}, the ED problem of grid-connected BESSs can be solved by using \eqref{power}, \eqref{PIc} with \eqref{res}, and \eqref{mps} with \eqref{rems}.
\end{proposition}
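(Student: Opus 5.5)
The plan is to verify the two control objectives of Section II-B one after the other, relying on Theorem \ref{TH3}, and then to tie the limits back to the optimality conditions \eqref{gsmmc}--\eqref{powermg}. The strategy is the usual KKT check: the problem is convex, so any feasible point that satisfies the first-order conditions is a global minimizer. Convexity holds because $\beta_i>0$, so each $f_i$ is convex, $f_{UG}$ is linear, and the box constraints are convex. The equality constraint involves the convex loss $Z_iP_i^2$. For that reason I would work with the relaxed version of the balance constraint ($\ge$), which is active at the optimum whenever $\rho>0$.

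\emph{Step 1 (marginal cost).} Part A of Theorem \ref{TH3} shows that, under the gains of Theorem \ref{TH1g}, the controller \eqref{PIc} with \eqref{res} gives $\lambda_i^k\to\rho$ for every $i$, with $\nu_{\lambda,i}^k\to 0$. Each inverter sets $P_i^k$ from $\lambda_i^k$ through \eqref{power}, with $\omega_1^*$ replaced by $\lambda_i^k$. The map $\lambda\mapsto\mathrm{sat}_{[P_{m,i},P_{M,i}]}\big((\lambda-\alpha_i)/(2(\beta_i+Z_i\lambda))\big)$ is continuous near $\rho$ as long as $\beta_i+Z_i\rho>0$. Hence $P_i^k\to P_i^*$. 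I then have to check the KKT conditions with $\omega_1=\rho$:
\begin{itemize}
\item For an unsaturated unit, $\partial L_a/\partial P_i=0$ is exactly the first line of \eqref{gsmmc}.
\item For a saturated unit, the sign of $2\beta_iP_i+\alpha_i-\rho(1-2Z_iP_i)$ supplies a nonnegative multiplier $\omega_{2,i}$ or $\omega_{3,i}$, because the unsaturated value lies outside the box on the corresponding side.
\item $\partial L_a/\partial P_{UG}=\rho-\omega_1=0$ holds trivially.
\end{itemize}

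\emph{Step 2 (balance).} Part B and Part C of Theorem \ref{TH3} give $\Delta\hat P_a^k\to 0$ together with the invariant \eqref{balance1}. Here $\Delta P_i^k=D_i+P_{L,i}^k-P_i^k$ is the local mismatch. Combining the two yields $P_{UG}^k-\sum_i(D_i+Z_i(P_i^k)^2-P_i^k)\to 0$, so $P_{UG}^k\to P_{UG}^*$ of \eqref{powermg}. The supply–demand equation of Section II-B therefore holds in the limit, and the limit point is feasible.

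\emph{Step 3 (conclusion).} Feasibility plus the KKT conditions, together with convexity, show that $(P^*,P_{UG}^*)$ is optimal, which proves the proposition. I expect the main obstacle to be the convexity of the equality constraint when losses are present. I would first argue, as above, that relaxing to $\sum P_i+P_{UG}\ge\sum(D_i+Z_iP_i^2)$ yields a convex problem whose optimum makes the constraint tight because the multiplier $\rho$ is positive. Failing that, I would accept local optimality using the second-order condition $2\beta_i+2\rho Z_i>0$. A secondary subtlety is that $\Delta P^k$ converges only because $\lambda^k$ converges, so the perturbation term in \eqref{deltp1} vanishes. This is exactly the coupling that Part C already handles, so I would simply invoke it.
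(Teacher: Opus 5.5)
Your argument is correct, and your Steps 1--2 are what the paper intends. The proposition has no proof of its own: it is read off from Theorem \ref{TH3} ($\lambda^k\to\rho 1$ from Part A; $\Delta\hat P_a^k\to 0$ plus the invariant \eqref{balance1}, which rests on the choice of $P_{UG}^0$, from Parts B--C). It also relies on the fact that the control objectives of Section II-B were built from \eqref{gsmmc}--\eqref{powermg}.

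The genuine difference is your Step 3. The paper obtains \eqref{gsmmc} by differentiating the Lagrangian with the capacity multipliers dropped. It then inserts the saturation in \eqref{power} without justification and never checks that these conditions identify a minimizer. You close that gap with a KKT sufficiency argument, including the sign check of $\omega_{2,i},\omega_{3,i}$ for saturated units. You therefore get an actual optimality statement rather than only stationarity.

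The convex-relaxation detour, and the local-optimality fallback, are heavier than needed, and they import assumptions ($\beta_i>0$, $Z_i\ge 0$, $\rho>0$) the paper never states. Because $P_{UG}$ is unconstrained and enters $F$ linearly, you can eliminate it through the balance constraint. This gives $F=\sum_i\big[(\beta_i+\rho Z_i)P_i^2+(\alpha_i-\rho)P_i\big]+\mathrm{const}$, a separable problem over the box. When $\beta_i+\rho Z_i>0$, its unique minimizer is the projection of $(\rho-\alpha_i)/(2(\beta_i+\rho Z_i))$ onto $[P_{m,i},P_{M,i}]$. That is exactly \eqref{power} with $\omega_1^*=\rho$, and $P_{UG}^*$ is then fixed by \eqref{powermg}.

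This route yields global optimality under the single condition $\beta_i+\rho Z_i>0$, which you already need for continuity in Step 1. It also explains why $P_i^*$ is independent of the loads, as observed in Case 4.
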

\section{Some Simulation Cases}
\par In order to verify the effectiveness and progressiveness of the designed algorithm, five case studies are arranged. The adopted BESS network and its accompanying communication network are shown in Fig. \ref{fig3}, where agent 1, according to Definition \ref{def1}, is considered as a flexible agent. Thus, the matrices that need to be used are given as follow,
$$L=\left[\begin{matrix}
	3&-1&-1&-1\\
	-1&2&0&-1\\
	-1&0&2&-1\\
	-1&-1&-1&3
\end{matrix}\right],{\cal L}=\left[\begin{matrix}
0&0&0&0\\
-1&2&0&-1\\
-1&0&2&-1\\
-1&-1&-1&3
\end{matrix}\right],$$ 
$$B=\left[\begin{matrix}
	1&0&0&0\\
	0&0&0&0\\
	0&0&0&0\\
	0&0&0&0
\end{matrix}\right], {\cal B}=\left[\begin{matrix}
	3&0&0&0\\
	-1&0&0&0\\
	-1&0&0&0\\
	-1&0&0&0
\end{matrix}\right].$$
\par Here, six case studies are organized to fully test the design scheme
on effectiveness, progressiveness, gain change, 
time phased electricity price and plug and play.
\par In Case 1, the designed distributed solution with PI+Reset controllers is validated.
 And the simulation results are compared with the scheme in \cite{chenDistributedEconomicDispatch2021} to illustrate the progressiveness, as discussed in Remark\ref{pro}\textcolor{blue}{, where the relevant comparative indicators are shown in Table \ref{table}}.
\par In Case 2, the impact of the parameters of the PI+Reset
 controller on the simulation results is investigated.
\par In Case 3, the effectiveness of the designed scheme is tested in a phased electricity price environment.
\par In Case 4, the effect of the designed method on load switching is verified.
\par In response to SoC level limitation, a case considering plug and play is set up and tested for effectiveness in Case 5. A solution that can meet the plug and play function of BESSs is sufficient to handle the access and exit of BESS at any time under conditions such as SoC level limitations.
\par Apply the designed scheme to a large-scale power system in Case 6 and perform simulation to verify the performance of the designed scheme.
\begin{table}[h]
		\centering
		\textcolor{blue}{\caption{Comparison of Two Schemes on MC Consensus}}\label{table}%
		\textcolor{blue}{\begin{tabular}{@{}ccc@{}}
			\toprule
			\diagbox{Indicators}{Schemes}
			&\makecell{The one in this paper}&\makecell{The one in \cite{chenDistributedEconomicDispatch2021}}\\
			\midrule
			Settling Time &$25s$&$85s$\\
			\addlinespace
			Consensus Time &$25s$&$85s$\\
			\toprule
		\end{tabular}}
\end{table}
\begin{figure}
	\centering
	\includegraphics[width=8cm]{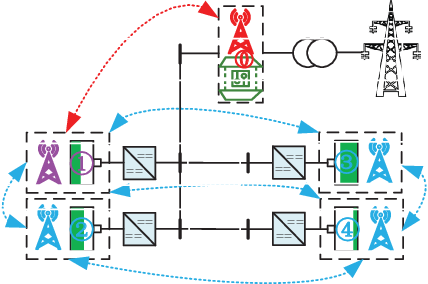} \caption{A BESS network with an ER and their communication graph.\label{fig3}}
\end{figure}
\begin{figure}
	\centering
	\includegraphics[width=8cm]{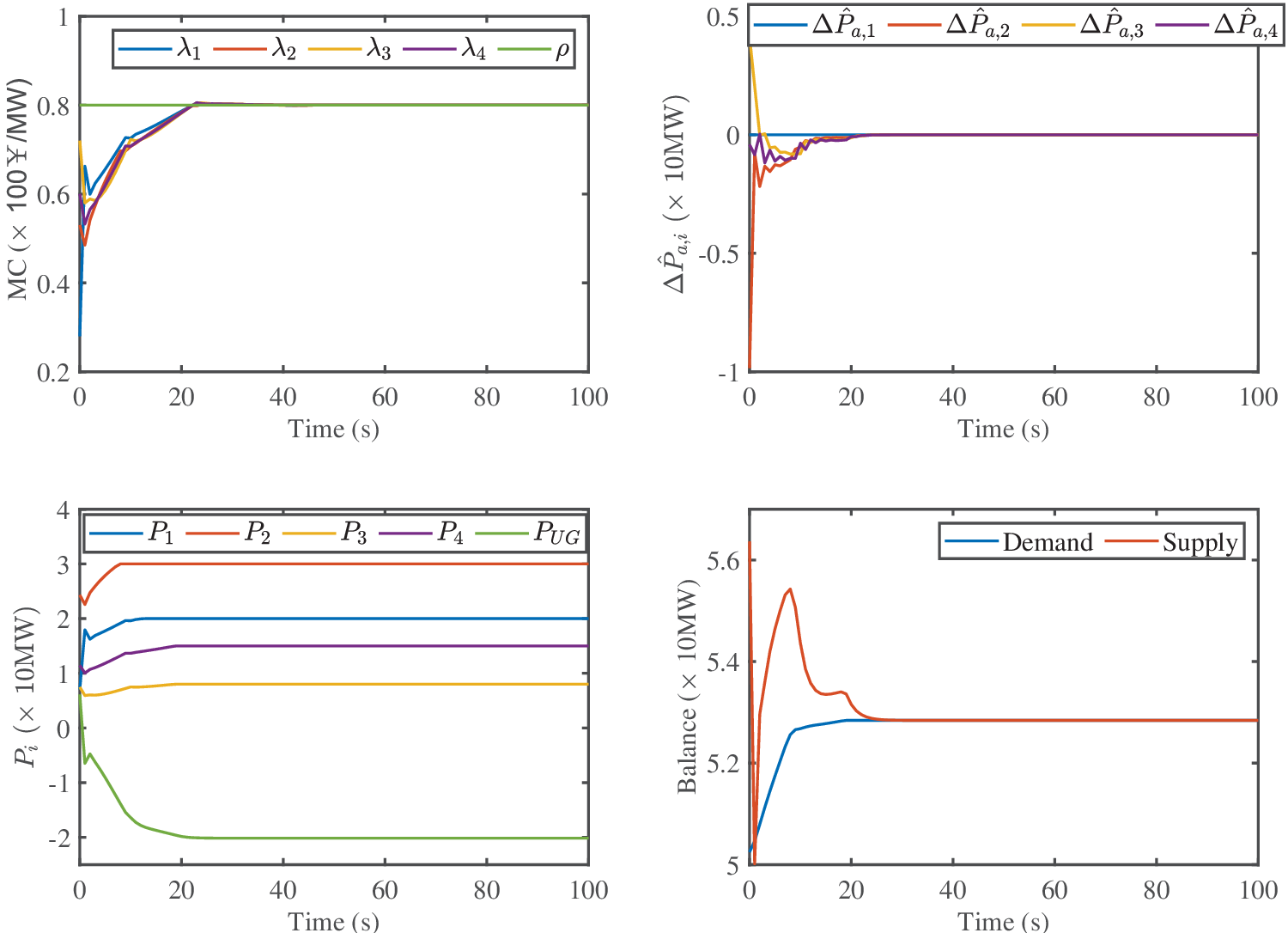} \caption{Simulation results of a distributed scheme with PI+Reset controllers in Case 1.\label{figc1}}
\end{figure}
\begin{figure}
	\centering
	\includegraphics[width=8cm]{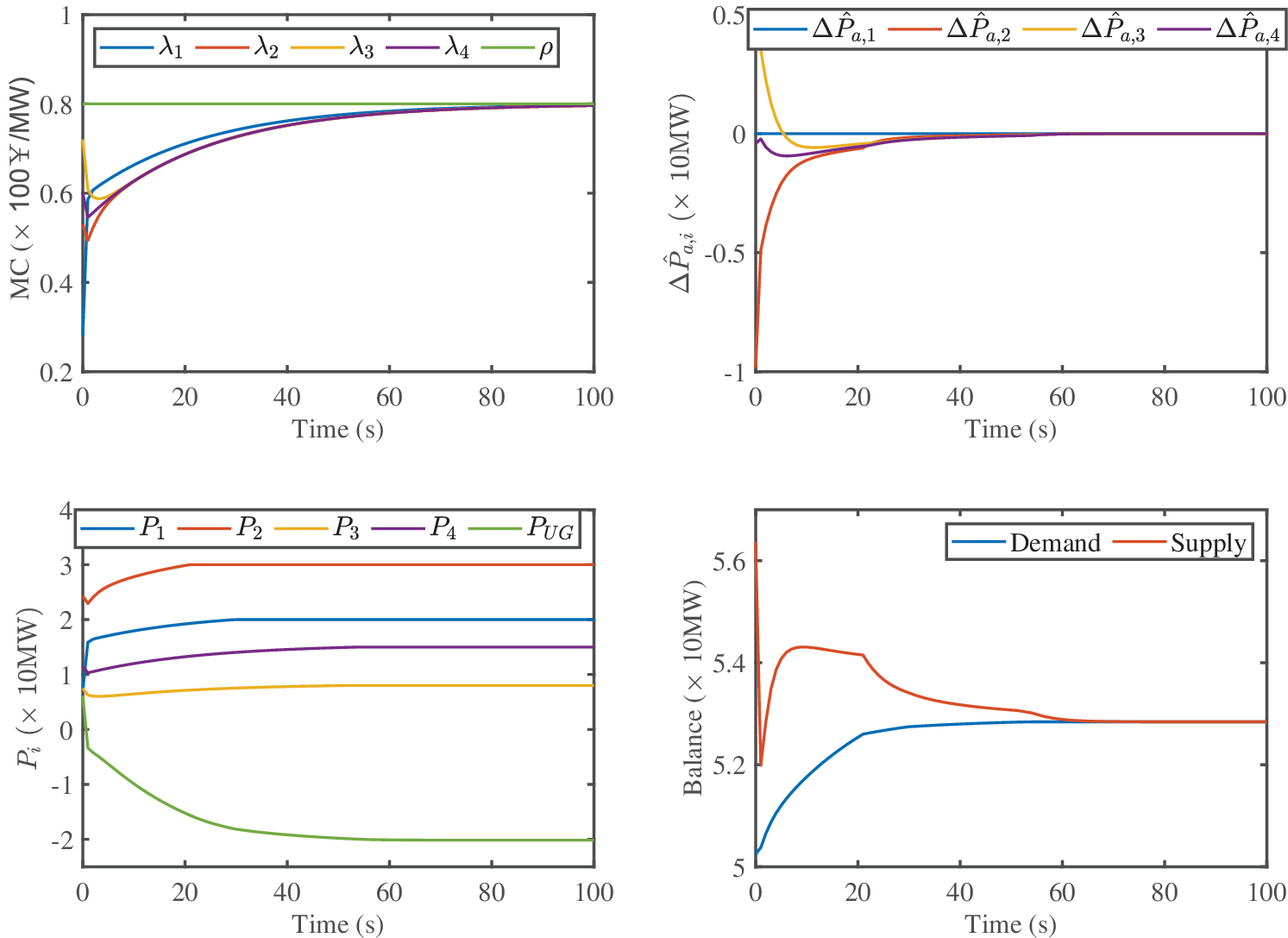} \caption{Simulation results of a distributed scheme designed in \cite{chenDistributedEconomicDispatch2021} in Case 1.\label{figc1c}}
\end{figure}
\begin{figure}
	\centering
	\includegraphics[width=8cm]{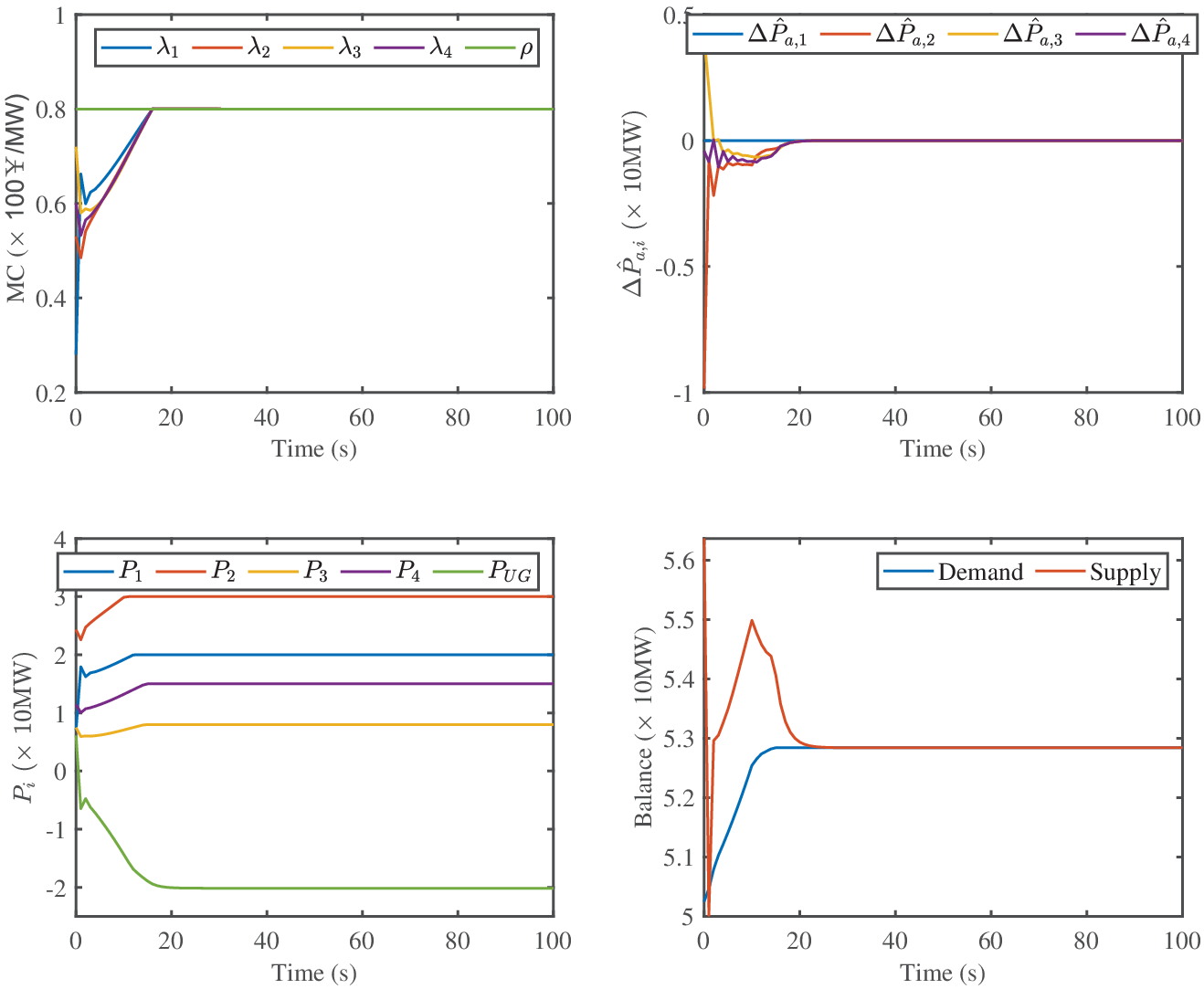} \caption{Simulation results of a distributed scheme with PI controllers with centralized reset mechanisms in Case 1.\label{figc1c1}}
\end{figure}
\subsection{Case 1. The Test on The Effectiveness of The Designed Scheme}
\par This case study is arranged to investigate the effect of the designed scheme to deal with the ED problem of the grid-connected BESS network shown in Fig. \ref{fig3}, and illustrate the progressiveness. The simulation results on MC, output power, average power mismatch estimation, and supply-demand balance are presented in Fig. \ref{figc1}. Correspondingly, the scheme designed in \cite{chenDistributedEconomicDispatch2021} is implemented for this BESS network for comparison, and the simulation results can be found in Fig. \ref{figc1c}.
\par From the simulation results in Fig. \ref{figc1}, it can be seen that MCs converge  asymptotically to the electricity price. The output power of each BESS is  asymptotically increasing and ultimately constrained by the maximum output power limit. On the contrary, the output power of UG  asymptotically decreases and ultimately remains stable, indicating that this BESS network is selling power to UG in exchange for profits. The estimation of average power mismatch ultimately converges to 0. Moreover, agent 1, as a flexible agent, can ensure that the estimated average power mismatch is always 0. And the balance between supply and demand can also be ensured.
\par Comparing the results of the two schemes, our scheme has significant advantages in convergence speed and control accuracy. This conclusion can be inferred by comparing the simulation results of any pair of variables. However, the designed reset scheme still causes some oscillations, as the current scheme cannot always keep $\xi^\lambda_i$ ($\xi^P_i$) and $\nu^\lambda_i$ ($\nu^P_i$) the same sign. It seems that there is still room for development here.
\par To verify Remark \ref{reset}, a centralized reset mechanism is introduced for the MC consensus scheme, i.e.,
$$\nu^k_\lambda=\left\{\begin{aligned}
	\nu^{k-1}_\lambda+\xi^{k-1}_\lambda,\; if\; k\in{\cal F}_\lambda\\
	\xi^{k-1}_\lambda,\; if\; k\in{\cal J}_\lambda.
\end{aligned}\right.$$
Based on this, the simulation results are shown in Fig. \ref{figc1c1}. Compared with Fig. \ref{figc1}, the solution with centralized reset mechanism exhibits better dynamic performance, and the reason for this is not yet clear, which is a problem worth exploring. When choosing a reset mechanism, there is a trade-off between a centralized reset mechanism and a decentralized one, which is because although the centralized reset mechanism leads to better results, it requires a control center. The following simulations adopt a decentralized reset mechanism unless otherwise specified.
\subsection{Case 2. The Test on The Impact of Gains on Simulation Results}
\par In order to investigate the impact of controller gains, i.e., $h_1$, $h_2$, $z_1$, and $z_2$, on the effectiveness of the designed scheme, three different values for each gain are selected to execute the simulation program. MC and $\Delta {\hat P}_{a,i}$ are selected for comparison under different gains in this case, as shown in Fig. \ref{figc2} and \ref{figc2c}.
\par From Fig. \ref{figc2}, it can be seen that with the increase of $h_1$, MC converges faster to the electricity price, and this rate has significantly improved. With the increase of $h_2$, the improvement in this rate is also significant, and the control accuracy is significantly increasing. However, a larger $h_2$ can cause local oscillations. Therefore, a larger $h_1$ is chosen to improve the convergence rate, while $h_2$ should not be set too large, according to Lemma \ref{eigl}, to ensure stability.
\begin{figure}
	\centering
	\includegraphics[width=8.4cm]{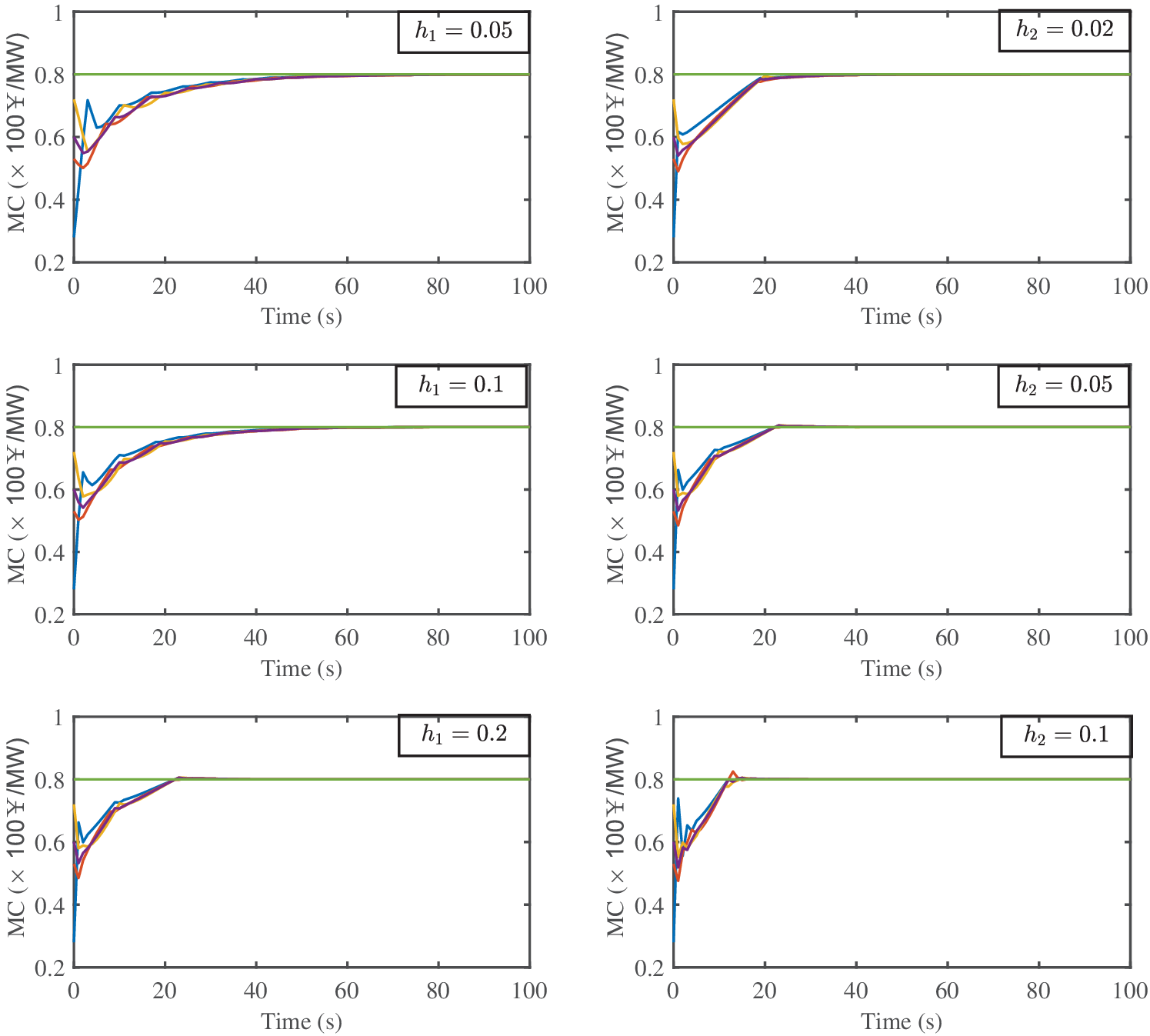} \caption{The impact of parameters $h_1$ and $h_2$ on MC consensus in Case 2.\label{figc2}}
\end{figure}
\par From Fig. \ref{figc2c}, it can be seen that with the increase of $z_1$ and $z_2$, estimated average power mismatch converge to 0 significantly faster.
\begin{figure}
	\centering
	\includegraphics[width=8.4cm]{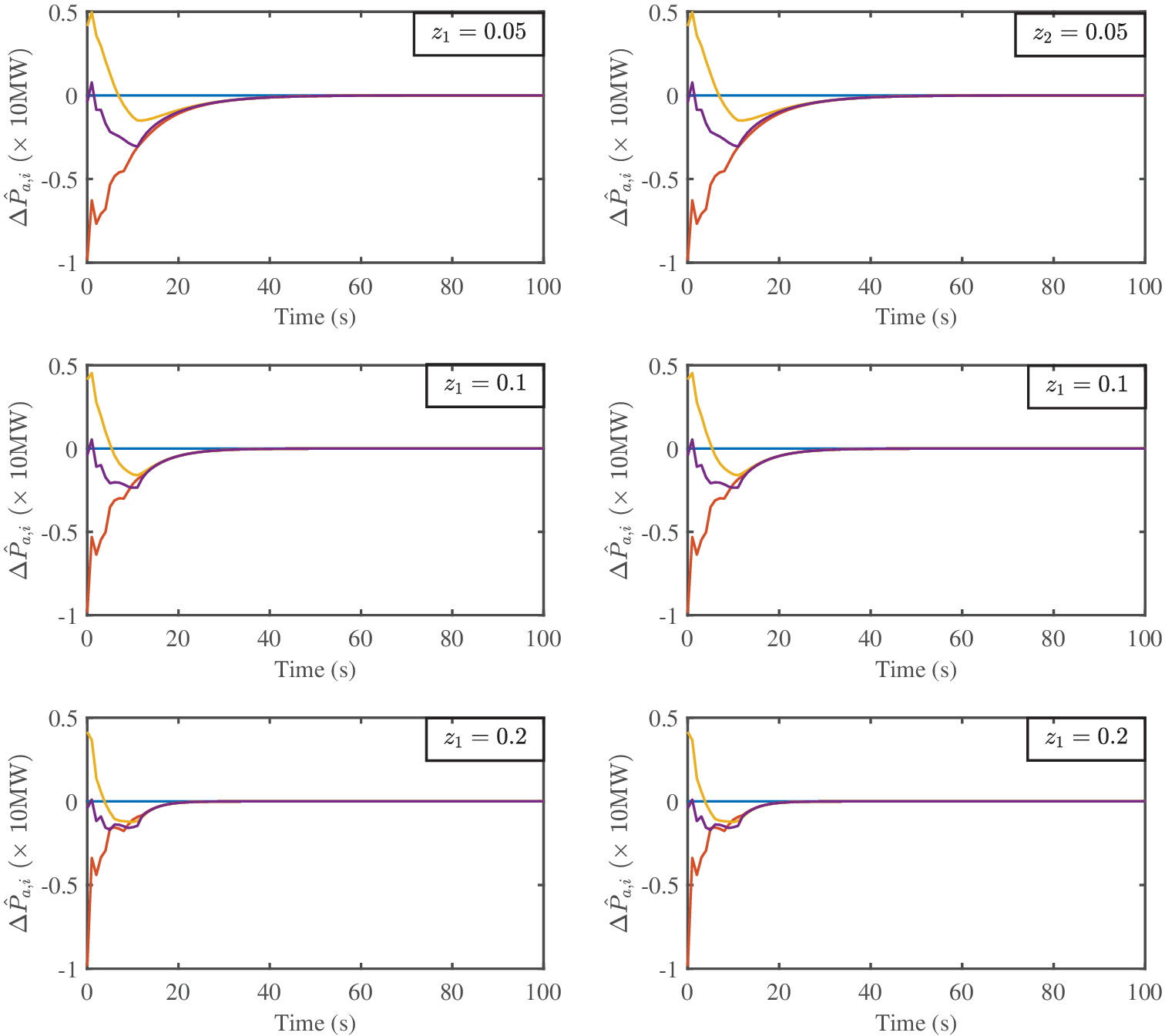} \caption{The impact of parameters $z_1$ and $z_2$ on average power mismatch estimation in Case 2.\label{figc2c}}
\end{figure}
\begin{figure}
	\centering
	\includegraphics[width=8cm]{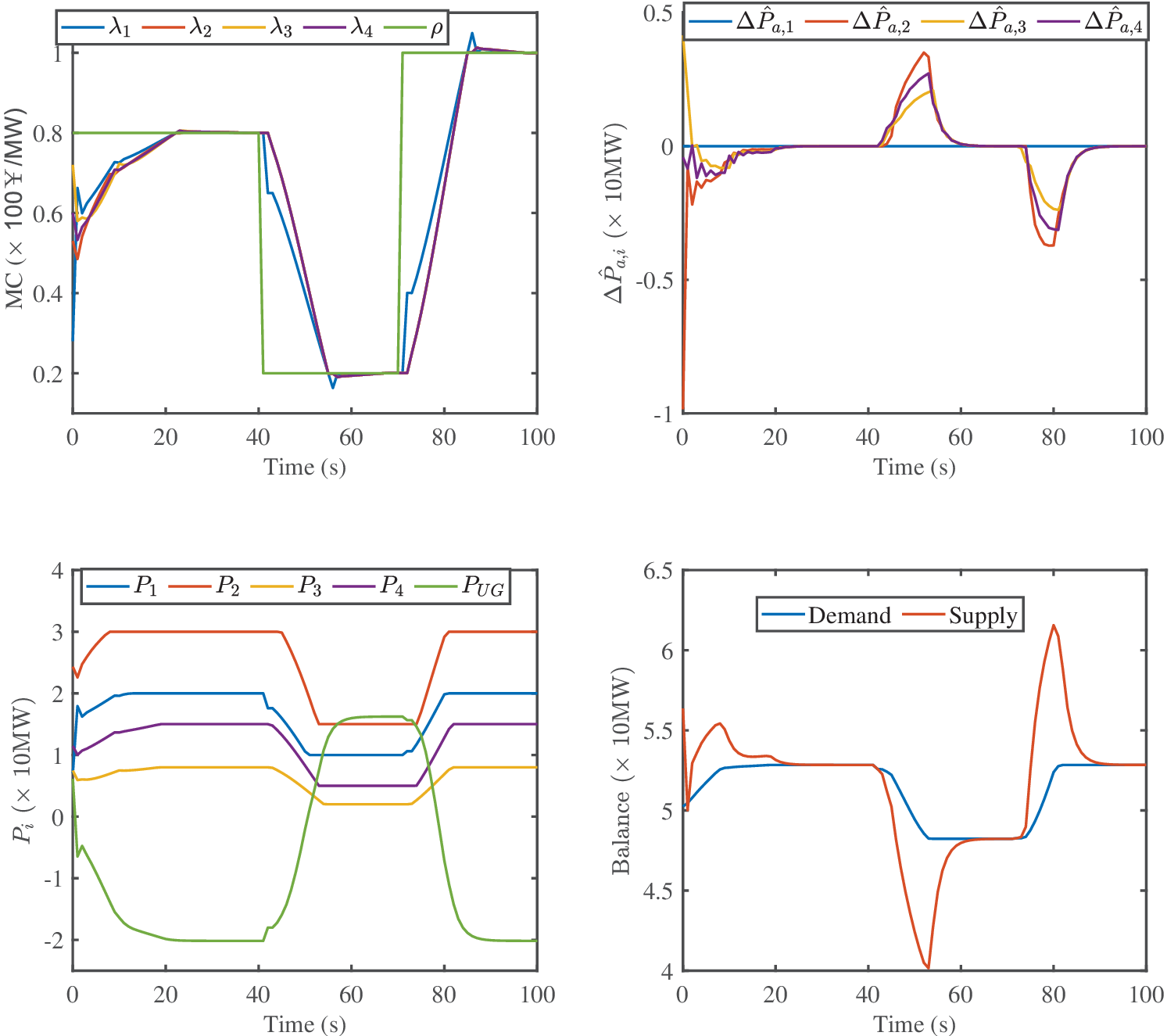} \caption{Simulation results of a distributed scheme with PI+Reset controllers under the time phased electricity price environment in Case 3.\label{figc3}}
\end{figure}
\subsection{Case 3. The Test in The Time Phased Electricity Price Environment}
\par On the basis of Case 1, this case is designed to investigate the performance of the designed algorithm in a time phased electricity price environment. The simulation results are shown in Fig. \ref{figc3}.
\par From the simulation results, it can be seen that in the time phased electricity price environment, MC can converge to the electricity price, which also leads to the stable BESS output
power and UG output power. Meanwhile, the average power mismatch estimation can converge to 0 and the supply-demand balance can be well maintained. These do not require resetting any variables. It should be noted that low electricity prices result in this BESS network absorbing power from UG. Due to output power limitations, increasing electricity prices becomes meaningless when the output power of all employees is at its maximum. Users can only reduce their electricity bills and overall expenses by expanding or introducing more BESSs. This is precisely the result of relying on electricity prices to guide users to use electricity reasonably.
\subsection{Case 4. The Test on Load Switching}
\begin{figure}
	\centering
	\includegraphics[width=8cm]{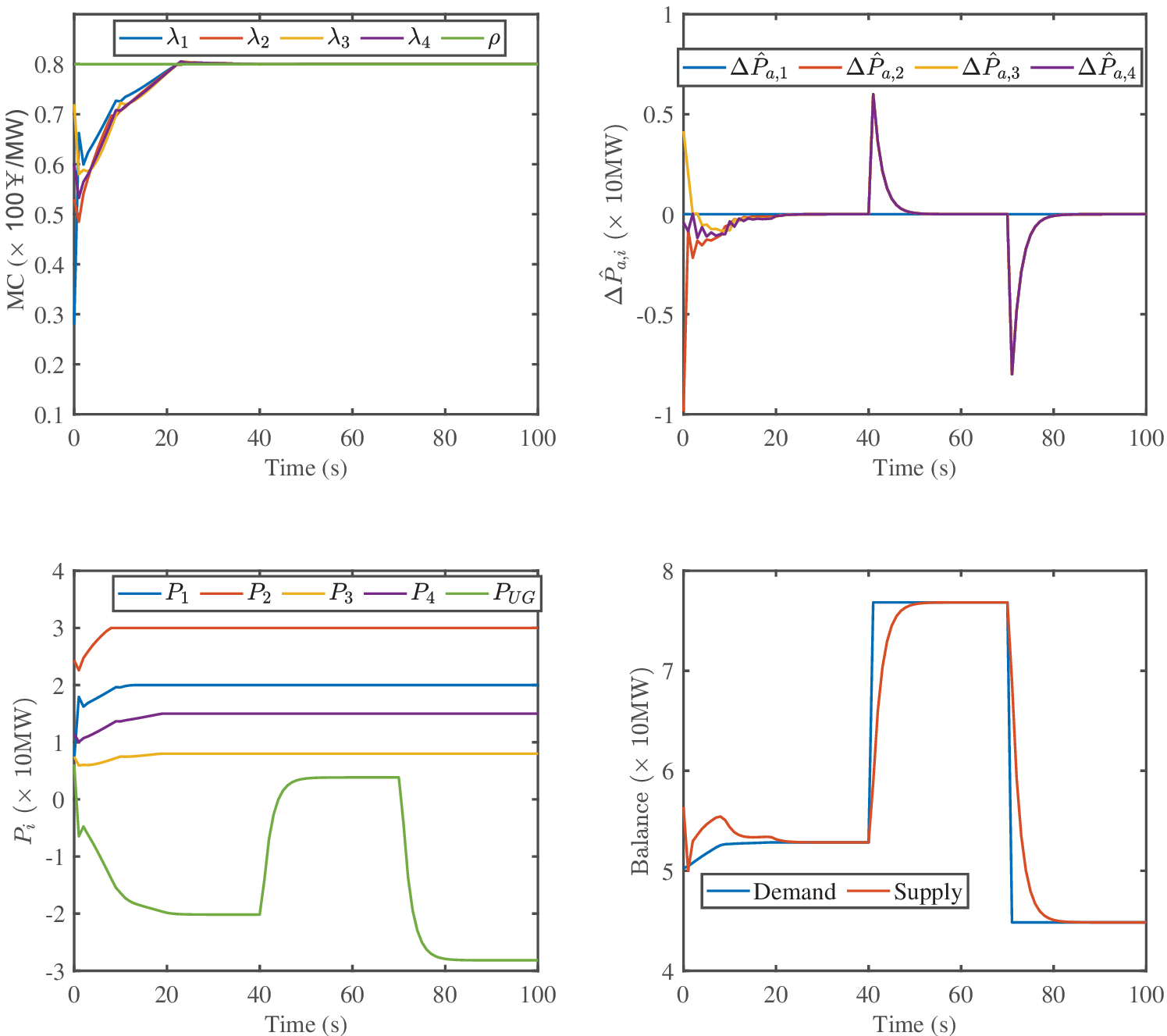} \caption{Simulation results on load switching in Case 4.\label{figc4}}
\end{figure}
\par On the basis of Case 1, this case is organized to test the designed ED scheme for a BESS network with load switching. The simulation results are shown in Fig. \ref{figc4}.
\par Since the change in load does not affect MC, the change in load will not be shared by any BESS, but will be borne solely by MG. That is to say, regardless of how the load changes, the output power of BESS remains unchanged, resulting in a constant line loss power. In addition, the average power mismatch estimation can converge to 0 under load switching conditions. The balance between power supply and demand can also be well maintained.
\begin{figure}
	\centering
	\includegraphics[width=8cm]{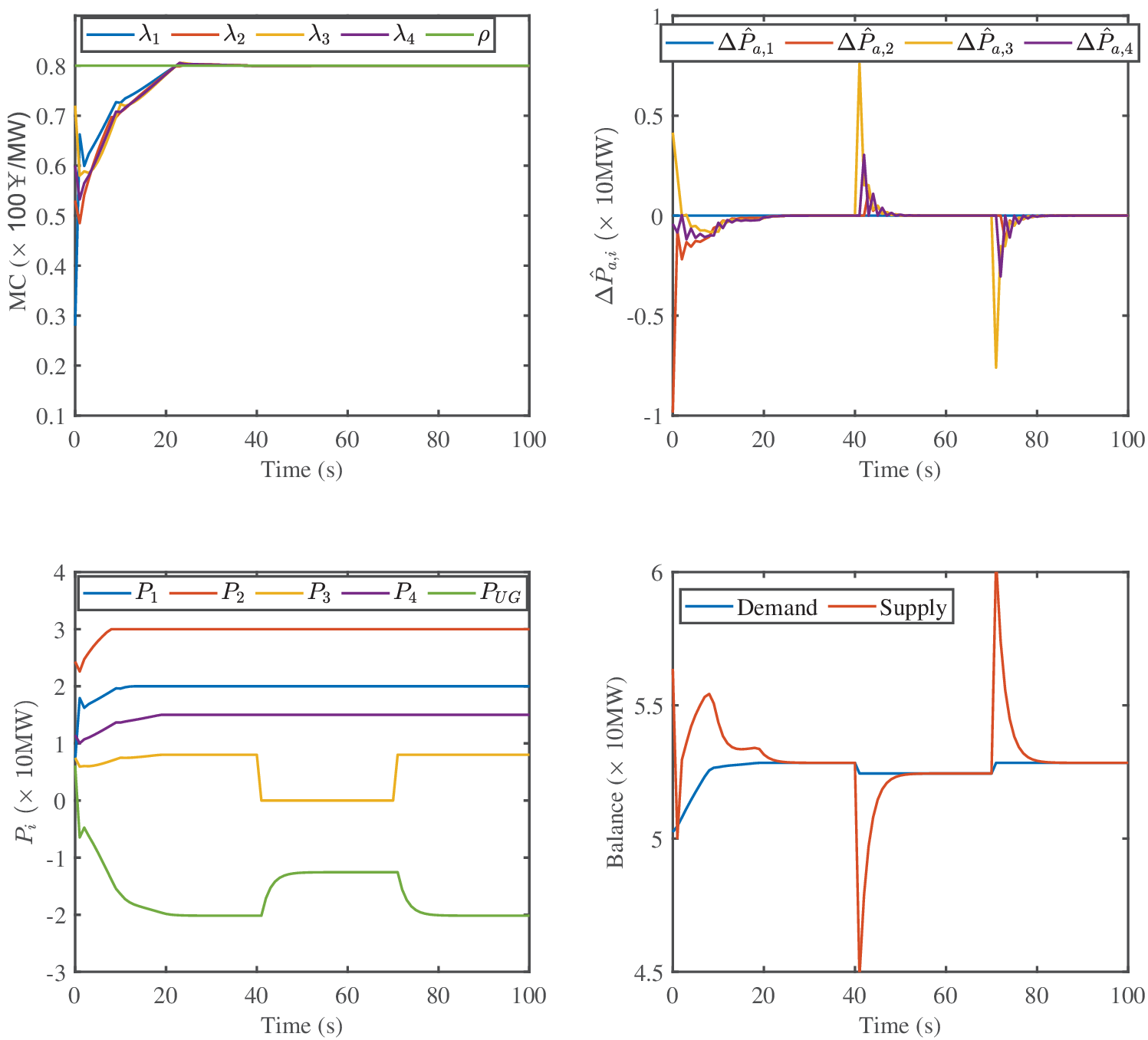} \caption{Simulation results on load switching in Case 5.\label{figc5}}
\end{figure}
\subsection{Case 5. The Test on The Plug and Play}
\par Due to the level of SoC, a BESS may exit at any time. Therefore, in this case, the designed ED scheme is tested in response to the access and exit of a BESS. At $ t=35s$, BESS 3 is removed. At this point, it is equivalent to reducing the capacity of BESS 3 to 0. At $t=70s$, BESS 3 resumes the connection. The simulation results are shown in Fig. \ref{figc5}.
\par From the simulation results, it can be seen that due to the removal of BESS 3, its output power has decreased to 0. This part of the power shortage is supplemented by UG until BESS 3 is reconnected, as other BESSs are operating at full load. The average power mismatch estimation of each BESS converges to 0 after experiencing some oscillations near the removal and restoration instants of BESS 3. Due to the absence of BESS 3, the total power loss has decreased, but this does not affect the balance of power supply and demand.
\par All of these indicate that as long as the communication topology is connected, our designed solution can solve the ED problem of a BESS network, whether accessing or removing some BESSs. This plug and play solution is also effective in addressing the ED problem of a BESS network constrained by SoC levels.
\begin{figure}
	\centering
	\includegraphics[width=8cm]{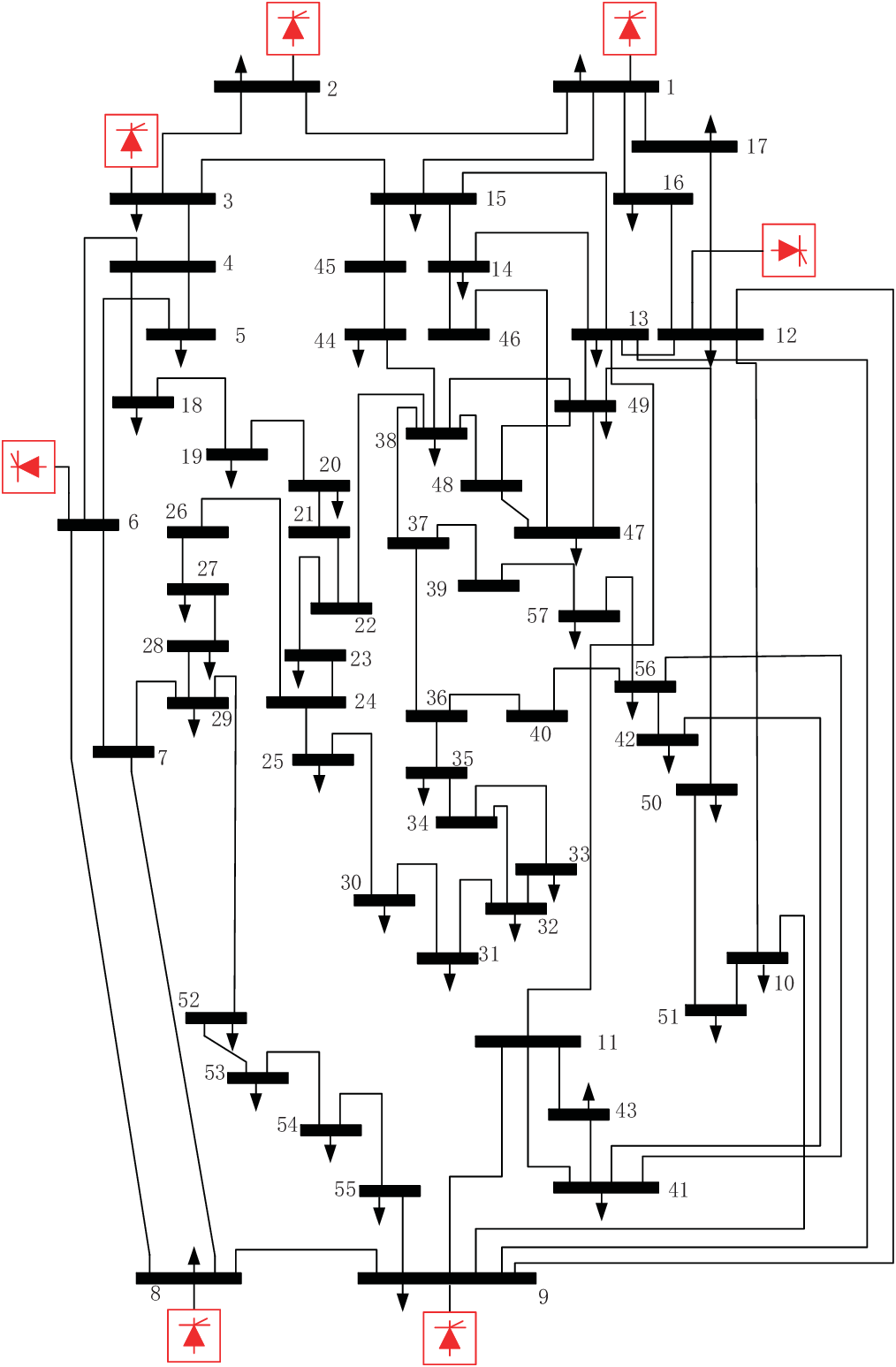} \caption{A modified IEEE-57 bus system.\label{fig6t}}
\end{figure}
\subsection{Case 6. The Test on A Large-scale Power System}
\par In order to test the performance of the designed algorithm applied to wide area systems, an modified IEEE-57 bus system as shown in Fig. \ref{fig6t} is selected to perform simulation, where the detailed configuration of this power system can be found in \cite{PSEA2023}. This system consists of 7 distributed generators, each of which is replaced by a BESS, and has 42 load nodes. The simulation results are shown in Fig. \ref{figc6}.
\par From the simulation results, it can be seen that MCs and the estimated values of average power mismatch can be reached, respectively, where MCs are able to converge to the electricity price of UG, and the estimated average power mismatch eventually converges to 0. Seven BESSs and a UG can ultimately achieve stable output, achieving power supply-demand balance. It is worth noting that compared to Case 1, the convergence speed of the algorithm on a wide area power system is slowed down. Anyway, the involved scheme is effective in addressing the distributed ED problem of a wide area power system.
\section{Conclusions}
\par To address the ED problem of a grid-connected BESS network, we design a distributed ED scheme with a PI+Reset controller. The gain conditions under which the reset mechanism acts are theoretically analyzed, and the effectiveness of this scheme is demonstrated. The effectiveness and progressiveness of the scheme are verified by a case study, and the performance of the scheme is tested in terms of different gains, time phased pricing environment, load switching, plug and play, and a large-scale system. The simulation results show that this scheme is effective, and compared to the scheme with a P controller, our designed scheme can improve convergence speed \textcolor{blue}{by reducing residence time and consensus time by 70.6\%} and control accuracy with almost no overshoot. When dealing with issues such as time phased electricity prices, plug and play, and a large-scale system, this scheme also demonstrates such advantages. However, there is still a virgin land worth striving for. In order to achieve the goal of supply-demand balance, a centralized reset mechanism is introduced into the average power mismatch estimator. It would be very meaningful if this restriction could be further relaxed.
\begin{figure}
	\centering
	\includegraphics[width=8cm]{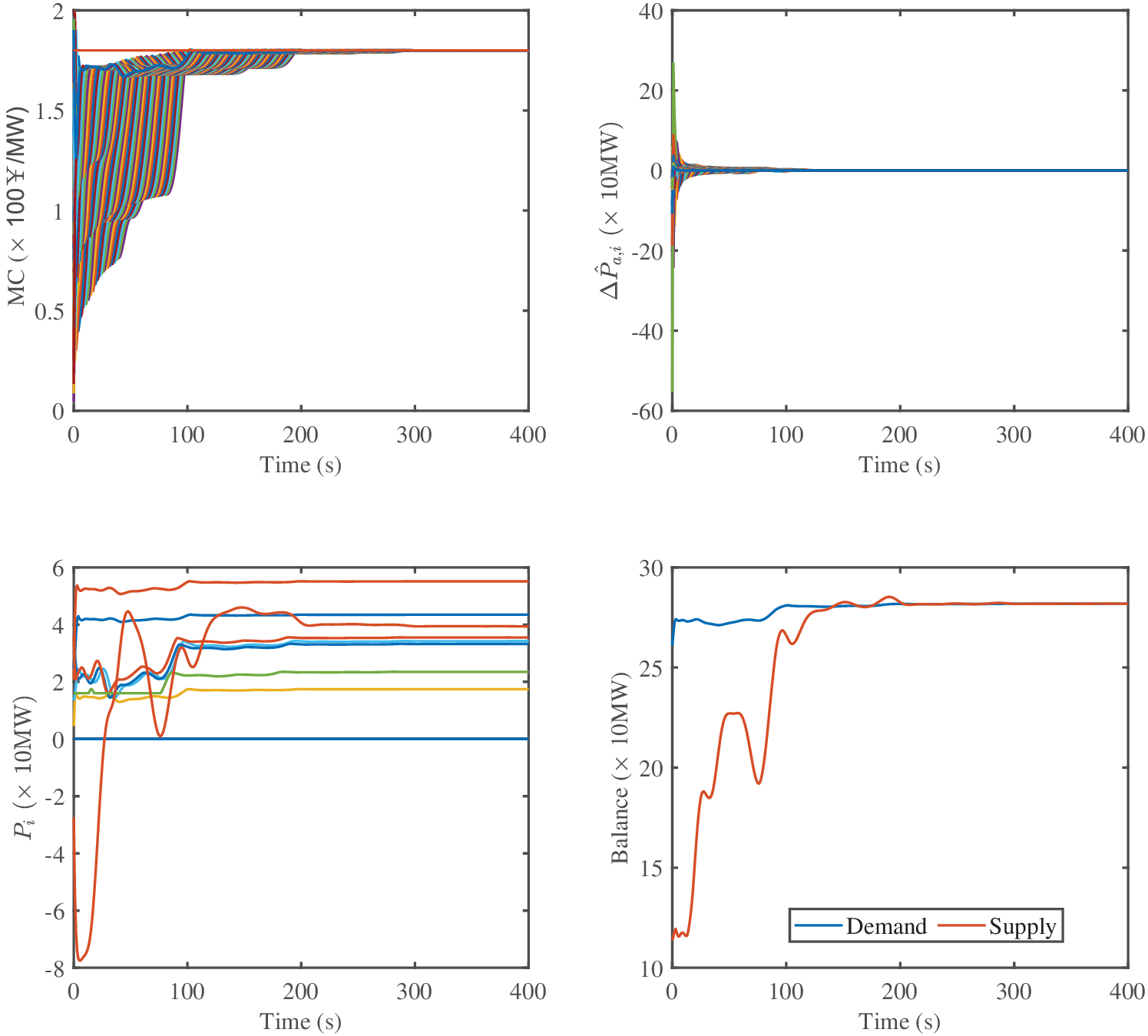} \caption{Simulation results on a large-scale power system in Case 6.\label{figc6}}
\end{figure}
\appendices
\ifCLASSOPTIONcaptionsoff
  \newpage
\fi

\bibliographystyle{IEEEtran}
\bibliography{re1.bib}
\vspace{-15 mm} 
\end{document}